\DeclareMathOperator{\X}{\mathbf{X}}
\def\x{{\mathbf x}}
\newtheorem{prop}{Proposition}
\newtheorem{mydef}{Definition}
\newtheorem{theorem}{Theorem}
\newtheorem{remark}[theorem]{Remark}
\title{A Coalitional Game for Distributed Inference in Sensor Networks with Dependent Observations}
\author{
Hao He and Pramod K. Varshney \\
Department of EECS, Syracuse University, Syracuse, NY 13244, USA
}
\begin{document}
\maketitle

\begin{abstract}
We consider the problem of collaborative inference in a sensor network with heterogeneous and statistically dependent sensor observations. Each sensor aims to maximize its inference performance by forming a coalition with other sensors and sharing information within the coalition. 
It is proved that the inference performance is a nondecreasing function of the coalition size. However, in an energy constrained network, the energy consumption of inter-sensor communication also increases with increasing coalition size, which discourages the formation of the grand coalition (the set of all sensors). 
In this paper, the formation of non-overlapping coalitions with statistically dependent sensors is investigated under a specific communication constraint. We apply a game theoretical approach to fully explore and utilize the information contained in the spatial dependence among sensors to maximize individual sensor performance. Before formulating the distributed inference problem as a coalition formation game, we first quantify the gain and loss in forming a coalition by introducing the concepts of \textit{diversity gain} and \textit{redundancy loss} for both estimation and detection problems. These definitions, enabled by the statistical theory of copulas, allow us to characterize the influence of statistical dependence among sensor observations on inference performance.  An iterative algorithm based on merge-and-split operations is proposed for the solution and the stability of the proposed algorithm is analyzed. Numerical results are provided to demonstrate the superiority of our proposed game theoretical approach.  
\end{abstract}

\begin{keywords}
Wireless sensor network, Distributed inference, Fisher information, Kullback-Leibler divergence, Game theory, inter-modality dependence, Copula theory 
\end{keywords}

\section{introduction}
In a distributed inference problem, each sensor collects observations regarding a phenomenon of interest, then shares them with other sensors or transmits them to the fusion center (FC). To reduce the energy cost for communication, the observations may be processed before transmission. The distributed nature of wireless sensor networks induces a tradeoff between minimizing the communication cost and maintaining acceptable performance levels. Although there has been a lot of work on distributed inference, including distributed detection and distributed estimation, with conditionally independent observations, much less has been done for the case of dependent observations \cite{goodbadugly,vvv06,cheng09,Niu2010,Sundaresan11,Sundaresan11a,Iyengar11,Iyengar12,HaoChen12,Shen2014}.

The spatial correlation among the sensor observations is a significant characteristic which
can be exploited to significantly enhance the overall network performance, including inference performance and energy efficiency. Typical applications of Wireless Sensor Networks (WSNs) require spatially dense sensor deployment in order to achieve satisfactory coverage. As a result, proximal sensors recording information about a single event are highly correlated with the degree of correlation increasing with decreasing internode separation. Such dependence among adjacent sensors or agents also exists in other intelligence aggregation networks. For example, in a crowd sourcing network, agents with the same backgrounds or having active interactions (e.g., following each other on social websites) are likely to have correlated knowledge/observations about the same event. Any network consisting of dependent agents having the ability to take measurement of the environment and making inference based on available observations,  such as wireless sensor networks, cognitive radio networks or a crowd sourcing network, are within the consideration of this work. For the simplicity of presentation, we use the term ``sensor" to represent an intelligent agent, which can be a real sensor, a cognitive radio, or a participating agent in a crowd sourcing network, in the remainder of this paper. Dependence among observations may make some sensors' observations redundant. An extreme case is when two sensors' observations are completely positively correlated, one of the two sensor will become ``redundant". Since transmitting ``redundant" observations from battery powered sensors to remotely located FC is energy inefficient, we have an opportunity to conserve energy via local collaboration in a densely located sensor network. 

The effect of dependent noise and hence dependent observations on Fisher Information (FI) has been studied by Yoon and Sompolinsky in \cite{Yoon98theeffect}. The authors showed that, in the biologically relevant regime of parameters, positive correlations degrade estimation performance compared with an uncorrelated population. Sundaresan et al.~\cite{Sundaresan11} considered location estimation of a random signal source where they focused on improving system performance by exploiting the spatial dependence of sensor observations. Parameter estimation with dependent observations in a variety of communication scenarios was considered in~\cite{ SHson2005}, but was limited to the case of ``geometric'' dependent Gaussian noise.  

Different approaches have been employed to study the detection problem with correlated observations, most of which focus on small sample size \cite{goodbadugly,HaoChen12}. It has been shown that correlation degrades overall performance either in the case of a binary signal in equicorrelated Gaussian noise \cite{vvv03} or in the cases where correlation increases with the decrease in inter-sensor distance \cite{vvv07}. In parallel sensor networks, the fusion of  statistically dependent observations is considered under various scenarios \cite{Sundaresan11a,Iyengar11,Iyengar12,hao12,hao14} and the design of local decision rules is investigated in \cite{HaoChen12} through the introduction of hierarchical independence model. Performance
of WSNs exposed to correlated observations is also assessed using the theory of large deviations \cite{vvv06}.

In this paper, we formulate a novel distributed inference framework which fully exploits and utilizes the inter-sensor dependence for improved overall system performance, given the inherent tradeoff between inference performance and transmission efficiency. This framework provides an opportunity to maintain a comparable inference performance to that of a centralized framework while achieving greater transmission efficiency, in networks with correlated sensors. In such a framework, there is no FC and each individual sensor is capable of sensing and computing. Sensors form non-overlapping coalitions and collaborate by sharing their observations within a coalition. In the process of forming coalitions, each sensor selfishly aims to maximize its own inference performance, and thus the performance of the coalition to which it belongs, as will be evident later. The problem is to find a set of non-overlapping coalitions such that each sensor's inference performance is maximized under certain energy cost constraints. To model and analyze the spatial dependence among sensor observations which might be heterogeneous (different marginal distributions), we use copula theory, which has been applied for inference with dependent observations in \cite{Iyengar11,Iyengar12,Sundaresan11a ,Sundaresan11}.

In our framework, each sensor is characterized not only by its individual inference performance achieved with its own observations, but also by its dependence with other sensors in the network. Unlike the individual performance which is fixed and unchangeable no matter which coalition the sensor belongs to, its dependence with other sensors plays different roles in different coalitions. In order to quantify the gain and loss of collaboration resulting from inter- sensor dependence, we introduced the concepts of redundancy loss and diversity gain for the distributed estimation in \cite{HaoHe13}. Other definitions of \textit{diversity} are available in different contexts in the signal processing literature. In cognitive radio systems, \textit{diversity} is acknowledged as the benefit of collaborative sensing and diversity order in various collaborative spectrum sensing schemes is quantitatively determined in \cite{cooperativediv}. In communication systems, \textit{diversity} is widely adopted as an indicator of the signal-to-noise ratio (SNR) dependent behavior of inference performance based on multiple received signals \cite{tse2005fundamentals,estimationdiv}. 
In the distributed inference problem that we are considering in this paper, diversity gain quantifies the positive effect of dependence on inference performance, in contrast with redundancy loss, which quantifies the redundant information induced by the dependence among sensor data. 

Since an optimal solution to the coalition formation problem may not exist, namely there may not be such a partition that every sensor's performance is maximize, our best hope is to find a stable solution. Thus, we use game theoretical approach and formulate our collaborative distributed inference problem as a coalition formation game. Game theory has been widely applied to statistical inference, such as measurement allocation for localization \cite{Farhad2008}, communication networks \cite{Walid2009}, and spectrum sensing \cite{Zhu2009}.  An iterative algorithm based on merge and split operations \cite{Apt_ageneric} is proposed in the literature to find a stable solution for the coalitional games discussed above.

Building on our preliminary work in \cite{HaoHe13}, which focused on an estimation problem, in this paper, we study the general problem of distributed inference in sensor networks with local collaboration. The major contributions of this paper can be summarized as follows:
\begin{itemize}
\item We fully investigate the different roles played by inter-sensor dependence for inference problems, including both detection and estimation; and we define diversity gain and redundancy loss to respectively characterize the benefit and loss in forming coalitions due to inter-sensor dependence in this more generalized setting.
\item We formulate a coalition formation game for the more generalized distributed inference problem with dependent observations for large heterogeneous sensor networks. We design an iterative algorithm based on merge and split operations to solve the coalition formation game, which is more efficient than other approaches available in the existing literature. 
\end{itemize}

The rest of the paper is organized as follows. Basic concepts of copula theory and coalitional games are introduced in Section \ref{sec:prelim} as background knowledge. Section \ref{sec:sm} describes the system model and the inference problem is formulated in general. Section \ref{sec:cde} introduces the distributed estimation problem and
analyzes the role of inter-sensor dependence. Section \ref{sec:dd} analyzes the  problem of distributed detection and quantifies the dependence-related diversity gain and redundancy loss.
Section \ref{sec:gf} proposes a coalition formation game and a merge-and-split based algorithm to obtain a stable solution. Section \ref{sec:sim} presents and discusses simulation results. We provide concluding remarks in Section \ref{sec:con}.

\section{Preliminaries}
\label{sec:prelim}
\subsection{Copula Theory}

Simply put, copula functions \emph{couple} multivariate joint distribution functions to their component marginal distribution functions \cite{Nelsen2006}. We begin with the definition of a copula function.

\begin{mydef}
\label{def:Copula}
A function $C:[0,1]^N \to [0,1]$ is an N-dimensional copula if $C$ is a joint cumulative distribution function (CDF) of an N-dimensional random vector on the unit cube $[0,1]^N$ with uniform marginals \cite{Joe1997, Kurowicka2006, Nelsen2006}.
\end{mydef}
The application of copulas to statistical signal processing is made possible largely because of the following theorem by Sklar \cite{Nelsen2006}.
\begin{theorem}
[Sklar's Thoerem] 
Consider an $N$-dimensional distribution function $F$ with marginal distribution functions $F_1,\ldots,F_N$. Then there exists a copula $C$, such that for all $x_1,\ldots,x_N$ in $[-\infty,\infty]$
\begin{equation}
\label{CopEq1}
F(x_1,x_2,\ldots,x_N) = C(F_1(x_1),F_2(x_2),\ldots,F_N(x_N))
\end{equation}
If $F_n$ is continuous for $1\leq n \leq N$, then $C$ is unique.
\label{thm:sklar}
\end{theorem}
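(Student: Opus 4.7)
The plan is to split the argument into the continuous case (where the proof is essentially one line after setting up the right objects) and the general case (where one must work with generalized inverses and extend $C$ off the ranges of the marginals). The uniqueness clause will fall out of the construction in the continuous case.

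First I would handle existence when each $F_n$ is continuous. Let $(X_1,\dots,X_N)$ be a random vector with joint CDF $F$ and marginals $F_1,\dots,F_N$. Define $U_n := F_n(X_n)$. By the probability integral transform, continuity of $F_n$ implies that each $U_n$ is uniform on $[0,1]$. Let $C$ be the joint CDF of $(U_1,\dots,U_N)$. By Definition~\ref{def:Copula}, $C$ is a copula. Then
\begin{equation*}
C(F_1(x_1),\dots,F_N(x_N)) = \IP(U_1 \le F_1(x_1),\dots,U_N\le F_N(x_N)) = \IP(X_1\le x_1,\dots,X_N\le x_N) = F(x_1,\dots,x_N),
\end{equation*}
where the middle equality uses that $\{F_n(X_n)\le F_n(x_n)\} = \{X_n\le x_n\}$ almost surely when $F_n$ is continuous. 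For uniqueness in this case, observe that each $F_n$ is onto $[0,1]$ (by continuity plus the limit behavior of a CDF), so the arguments $(F_1(x_1),\dots,F_N(x_N))$ sweep out all of $[0,1]^N$ as $(x_1,\dots,x_N)$ ranges over $\IR^N$; hence $C$ is pinned down on its whole domain by the identity above.

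Next I would extend existence to arbitrary (possibly discontinuous) marginals. Define the quasi-inverse $F_n^{(-1)}(u) := \inf\{x\in\IR : F_n(x)\ge u\}$ for $u\in(0,1)$, with the natural conventions at the endpoints. First build $C$ on the product of the ranges $\mathrm{Ran}(F_1)\times\cdots\times\mathrm{Ran}(F_N)$ by setting $C(u_1,\dots,u_N) := F(F_1^{(-1)}(u_1),\dots,F_N^{(-1)}(u_N))$; using monotonicity and right-continuity of each $F_n$, one checks that this is well defined, is $N$-increasing, is grounded, and has uniform marginals on the restricted domain. Then extend $C$ to all of $[0,1]^N$ by multilinear interpolation across the ``gaps'' left by the jumps of the $F_n$. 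A standard Lipschitz/monotonicity argument shows the extended function remains $N$-increasing with uniform marginals, hence a bona fide copula, while still satisfying~(\ref{CopEq1}).

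I expect the main obstacle to be the extension step in the general (non-continuous) case: verifying that the multilinear extension preserves the copula properties ($N$-increasingness in particular) requires a careful check of signed volumes across rectangles that straddle the boundary between ``filled'' and ``interpolated'' regions, and is the only place where the argument moves beyond the clean probability-integral-transform picture. The continuous case, by contrast, is essentially a one-line consequence of $\IP(F_n(X_n)\le t) = t$. Since the paper uses Sklar's theorem only as background motivation for the copula-based dependence modeling that follows, I would present the continuous case in full and sketch the extension, citing \cite{Nelsen2006} for the detailed $N$-increasingness verification.
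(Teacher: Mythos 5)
The paper does not prove Sklar's theorem at all: it is quoted verbatim as background and attributed to \cite{Nelsen2006}, so there is no in-paper argument to compare yours against. What you have written is the standard textbook proof, and it is essentially sound. The continuous case is correct: the probability-integral-transform construction $U_n := F_n(X_n)$, the observation that $\{F_n(X_n)\le F_n(x_n)\}$ and $\{X_n\le x_n\}$ differ by a null set (the flat parts of $F_n$ carry no mass), and the uniqueness argument all go through; note that your claim that $F_n$ is onto $[0,1]$ is legitimate here only because the theorem is stated over $[-\infty,\infty]$ --- for a CDF on $\IR$ the range need only contain $(0,1)$, and one would then invoke the Lipschitz continuity of copulas to pass from the dense set $\mathrm{Ran}(F_1)\times\cdots\times\mathrm{Ran}(F_N)$ to all of $[0,1]^N$. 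The general (discontinuous) case is, as you acknowledge, only sketched: the construction via quasi-inverses yields a subcopula on the product of the ranges, and the genuinely technical content of Sklar's theorem is the extension lemma showing that the multilinear interpolation across the jump gaps preserves $N$-increasingness and the uniform marginals. You correctly identify this as the crux and defer it to \cite{Nelsen2006}, which is a reasonable division of labor given that the paper itself treats the theorem as imported background; just be aware that as written your argument is a complete proof only under the continuity hypothesis, and a proof sketch otherwise.
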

Conversely, given a copula $C$ and univariate CDFs $F_1,\ldots,F_N$, $F$ as defined in \eqref{CopEq1} is a
valid multivariate CDF with marginals $F_1,\ldots,F_N$.
 According to Sklar's Theorem~\cite{Nelsen2006}, for continuous distributions, the joint probability density function (PDF) can be obtained by differentiating both sides of \eqref{CopEq1}
\begin{equation}
\label{CopEq2}
 f(x_1,\ldots,x_N) = \left(\prod_{n=1}^{N}f_n(x_n)\right)c(F_1(x_1),\ldots,F_N(x_N)|\boldsymbol{\phi})
\end{equation}
where $c(\cdot)$ is termed as the copula density function and is given by  
\begin{equation}
\label{CopDens}
 c(\mathbf{u}) = \frac{\partial^NC(u_1,\ldots,u_N)}{\partial u_1,\ldots,\partial u_N}
\end{equation}
with $u_n=F_n(x_n)$. Copula functions contain a \emph{dependence parameter} $\boldsymbol{\phi}$ that quantifies the amount of dependence among the $N$ random variables. It needs to be noted that this is well-suited for modeling heterogeneous random vectors where a different distribution might be needed to model each marginal $x_n$. Several copula functions are defined in the literature~\cite{Nelsen2006} of which the elliptical and Archimedean copulas are widely used.  

An attractive feature of copulas is their relationship with the nonparametric rank-based measures of dependence, such as Kendall's $\tau$
\footnote{ Let $(X_1,Y_1)$ and $(X_2,Y_2)$ be two independent pairs of random variables with a common joint distribution function $H$ and copula $C$, The population version $\tau_{X,Y}$ of Kendall's $\tau$ is defined as the probability of concordance minus the probability of discordance: $\tau_{X,Y} = P[(X_1-X_2)(Y_1-Y_2) >0]-P[(X_1-X_2)(Y_1-Y_2)<0])$.}.
The relationship for a copula $C$, and the Kendall's $\tau$ for random variables $X$ and $Y$ is given by ~\cite[p. 159]{Nelsen2006} 
\begin{equation}
\tau_{X,Y}=4\int \int C(u,v) \mathrm{d} C(u,v) -1
\label{rhocopula}
\end{equation}
where $u=F_X(x),v=F_Y(y)$. The relationship in \eqref{rhocopula} results in a one-to-one correspondence between Kendall's $\tau$ and copula parameter $\boldsymbol{\phi}$, based on which a rank-based estimation of dependence parameter $\boldsymbol{\phi}$ can be performed. 

\subsection{Coalitional Game Theory}
To facilitate the formulation of our problem, we introduce basic concepts in coalitional game theory. Let $\mathcal{N}=\{1, 2, \dots, N\}$ be a set of fixed players called the \textit{grand coalition}. Nonempty subsets of $\mathcal{N}$ are called \textit{coalitions}. A \textit{collection} (in the grand coalition $\mathcal{N}$) is any family $\mathcal{S} :=\{S_1, \dots, S_m\}$ of mutually disjoint coalitions. If additionally $\cup_{j=1}^m S_j=\mathcal{N}$, the collection $\mathcal{S}$ is called a \textit{partition} of $\mathcal{N}$. 

Assuming a comparison relation $\triangleright$, $\mathcal{R}=\{R_1, \dots, R_k\} \triangleright \mathcal{S}=\{S_1, \dots, S_m\}$ means that the way $\mathcal{R}$ partitions $\mathcal{N}$, where $\mathcal{N}= \cup_{i=1}^k R_i= \cup_{j=1}^m S_j$, is preferred over the way $\mathcal{S}$ partitions $\mathcal{N}$ based on some performance measure.
Pareto order can be used as a comparison relation $\triangleright$.
For a collection $\mathcal{R}=\{R_1, \dots, R_k\}$, the utility of a player $j$ in a coalition $R_j \in \mathcal{R}$ is denoted by $\Phi_j(\mathcal{R})$,
and the Pareto order is defined as follows
\begin{equation}
\mathcal{R} \triangleright \mathcal{S} \iff \{\Phi_j(\mathcal{R}) \geq \Phi_j(\mathcal{S}), \forall j \in \mathcal{R},\mathcal{S} \}
\end{equation}
with at least one strict inequality for a player $k$. 

Apt and Witzel~\cite{Apt_ageneric} proposed an abstract approach to coalition formation that focuses on simple merge-and-split rules to transform partitions of a group of players. Details of coalition formation will be introduced in detail in Section \ref{sec:cde}.

\section{system model}
\label{sec:sm}
We consider a physical phenomenon being continuously observed by a set of densely deployed sensors,  which is represented by $\mathcal{N}=\{1,2, \dots, N\}$. Each sensor's observation is $x_n$. Let
$\theta$ be the parameter that denotes the phenomenon of interest in the received signal $x_n$ at sensor $n$ for the general inference problem.
When we consider a detection problem, $\theta$ represents a binary discrete variable, while in the case of parameter estimation, $\theta$ is a realization of a continuous random variable $\Theta$ with PDF $f_{\Theta}(\cdot)$. Due to high density of sensors in the network topology, sensor observations are highly correlated spatially. 

%

In a non-collaborative setting, each sensor continuously senses the environment, and locally makes inference about the unknown parameter $\theta$ solely based on its own observations. In this work, we consider a collaborative setting where collaboration exists within coalitions. Participating sensors are required to act in accordance with the following rules: 
\begin{enumerate}
\item Sensors first form coalitions, and each sensor can only join one coalition.  
\item Once the coalitions are formed, a sensor can request observations from all the other sensors in the same coalition and make an inference; it also has to transmit its observations to the other collaborating sensors upon their request.
\end{enumerate}
In such a collaborative setting, each sensor, as an independent agent, aims to improve its own inference performance through collaboration with the most ``useful" sensors. The coalition formation process, namely, how the coalitions should be formed such that each selfish sensor has its performance maximized, is the focus in this paper. 

An intuitive solution would be that all the sensors form a grand coalition such that every sensor enjoys the benefit of collaboration to the maximum extent. However, in an energy constrained network, each sensor's energy is finite and a communication cost is incurred when it transmits its observations to collaborating sensors. Let $r$ be the average number of requests initiated by each sensor in the network per unit time interval. Then, for any sensor in coalition $S$, the number of requests that have to be responded to within a unit time interval is $r (|S|-1)$, where $|S|$ denotes the cardinality of coalition $S$. 
We assume that energy consumption for a single transmission is $E_t$. The average energy consumption per unit time interval for each sensor in coalition $S$ is $E(S)=r (|S|-1)E_t$, which increases as the coalition size increases. Let the energy consumption of a coalition be the average energy consumption per sensor in this coalition, which is the same quantity $E(S)=r (|S|-1)E_t$. Thus, from the point of view of energy consumption, smaller coalitions are preferred. In order to guarantee adequate sensors' lifetime, we enforce the energy consumption constraint as follows
\begin{eqnarray}
E(S) =r (|S|-1)E_t < \alpha, \quad \forall S \in \mathcal{S}.
\label{eq:energycon}
\end{eqnarray}

Then the problem is to find the optimal partition $\mathcal{S}$ of the set of sensors $\mathcal{N}$ such that each sensor's inference performance is maximized subject to the energy constraint in (\ref{eq:energycon}).  
\begin{eqnarray}
&&\max_{\mathcal{S} \in \mathcal{P}} \Delta_n(\mathcal{S}), \quad \forall n \in \mathcal{N} \nonumber \\
&& \text{subject to} ~E(S)< \alpha, \quad \forall S \in \mathcal{S}
\label{eq:optimization}
\end{eqnarray}
where $\Delta_n(\mathcal{S})$ represents the inference performance of sensor $n$ under partition $\mathcal{S}$, and $\mathcal{P}$ denotes the set of all possible partitions of $\mathcal{N}$.

For the optimization problem in (\ref{eq:optimization}), an exhaustive approach in which we search over all possible partitions will invoke a very high computational complexity. According to \cite{Sandholm1999209}, for a network with $N$ sensors, the total number of partitions is $O(N^N)$.  
Besides computational issues, an exhaustive search may not be able to give us a solution to the problem in (\ref{eq:optimization}), since there may not exist a partition such that each sensor's performance is maximized simultaneously while the energy consumption constraint is satisfied. For the same reason, if each sensor solves its optimization problem iteratively by itself, the overall system optimization algorithm may not converge.
Thus, our best hope is to find a stable solution \footnote{A stable solution can simply be interpreted as a partition where no player has the incentive to leave the current partition. Stability will be discussed in detail later in this work.} and to do that we use a game theoretical approach. Before formulating the distributed inference problem as a coalition formation game, we need to define and analyze the gain and the loss of each sensor when it joins a coalition, in the context of dependent observations. The analysis is carried out respectively for the problem of estimation and detection in the following two sections.

\section{Collaborative Distributed Estimation}
\label{sec:cde}
In the estimation problem, the optimization problem can be formulated as the minimization of Posterior Cramer-Rao Lower Bound (PCRLB), or equivalently, the maximization of posterior Fisher Information (FI), which  is given by
\begin{eqnarray}
&&FI =-\mathbb{E}_{\X, \Theta} \left[ \frac{\partial^2}{\partial \theta^2} \log f_{\X}(\x;\theta) \right]\nonumber \\
&& = -\mathbb{E}_{\X, \Theta}\left[ \frac{\partial^2}{\partial \theta^2} \log f_{\X}(\x;\theta) \right]  \nonumber \\
&&- \mathbb{E}_{ \Theta}\left[ \frac{\partial^2}{\partial \theta^2} \log f_{\Theta}(\theta)\right] \nonumber \\
&& = I +I_P
\end{eqnarray}
where $f_{\X}$ represents the joint PDF of $\X:=[X_1,\dots, X_N]$; $I$ and $I_P$ represent the sensor data's contribution and prior's contribution to posterior FI respectively.
The prior's contribution is fixed given the distribution of $\Theta$. Thus, we only consider sensor data's contribution. Since $I$ is the FI averaged over the distribution of $\Theta$, it is referred to as the average FI \cite{avgfi}. 
For the coalition $S$ whose set of observations is $\x_S :=[x_n, \forall n \in S]$, the average FI it can achieve is given as  
\begin{eqnarray}
\label{eq:FIofS}
I(S)=-\mathbb{E} \left[ \frac{\partial^2 \log f_{\X_S}(\x_S;\theta)}{\partial \theta^2}\right]
\end{eqnarray}
where $f_{\X_S}(\cdot)$ denotes the joint distribution of $\X_S$ and the expectation is taken with respect to the joint distribution of $\X_S, \Theta$.

\begin{remark}
\label{remark:FI}
As an immediate result of the modus operandi of the network, the estimation performance, i.e., average FI, achievable at sensor $n$ that is in coalition $S$, denoted by $I_n(S)$, equals the average FI contained in coalition $S$, which is denoted by $I(S)$. That is 
\[
I_n(S) = I(S), ~ \forall n \in S
\].
\end{remark}

\begin{prop}
$I(S)$ is a nondecreasing function of the cardinality of $S$.
\label{prop:nonde}
\end{prop}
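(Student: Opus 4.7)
The plan is to show that adding one more sensor to a coalition cannot decrease the average Fisher information, and then iterate. Concretely, take any coalition $S$ and any sensor $k \notin S$, and set $T = S \cup \{k\}$. The goal is to prove $I(T) \geq I(S)$, from which the nondecreasing property follows by induction on the number of sensors added.

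The key tool is the chain rule for densities applied inside the definition (\ref{eq:FIofS}). Write
\begin{equation}
f_{\X_T}(\x_T;\theta) \;=\; f_{\X_S}(\x_S;\theta)\, f_{X_k\mid \X_S}(x_k\mid \x_S;\theta),
\end{equation}
take logarithms, and differentiate twice in $\theta$ to obtain an additive decomposition of $\partial^2_\theta \log f_{\X_T}$. Taking the expectation over $(\X_T,\Theta)$ then yields
\begin{equation}
I(T) \;=\; I(S) \;+\; I_{k\mid S},
\end{equation}
where $I_{k\mid S} := -\mathbb{E}\bigl[\partial^2_\theta \log f_{X_k\mid \X_S}(X_k\mid \X_S;\Theta)\bigr]$ is a conditional Fisher information term.

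It remains to argue that $I_{k\mid S}\ge 0$. The standard route is to invoke the usual regularity conditions (interchange of differentiation and integration of the conditional density in $\theta$), under which one can rewrite
\begin{equation}
I_{k\mid S} \;=\; \mathbb{E}\!\left[\left(\frac{\partial}{\partial \theta}\log f_{X_k\mid \X_S}(X_k\mid \X_S;\Theta)\right)^{\!2}\right]\;\ge\;0.
\end{equation}
Combined with the additive decomposition, this gives $I(T)\ge I(S)$, and iterating over sensors in $T\setminus S$ establishes monotonicity for any $S\subseteq T$, which is equivalent to the stated claim since $|S|<|T|$ whenever $S\subsetneq T$.

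The step that deserves the most care is the second one: the interchange of $\partial^2/\partial\theta^2$ with the expectation, which underlies both the identity $I_{k\mid S} = \mathbb{E}[(\partial_\theta \log f_{X_k\mid \X_S})^2]$ and the validity of writing $I(S)$ in the Hessian-of-log-likelihood form to begin with. I would handle this by explicitly assuming the standard regularity conditions under which Fisher information is well-defined (dominated differentiation, support of $f_{\X_S}(\cdot;\theta)$ not depending on $\theta$), and note that, since these conditions are already implicit in the definition (\ref{eq:FIofS}) adopted for $I(S)$, they transfer to the conditional density via the chain rule. No additional structure on the copula is required, so the result holds regardless of the nature of inter-sensor dependence.
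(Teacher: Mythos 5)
Your proposal is correct and follows essentially the same route as the paper: both decompose the joint density into a marginal times a conditional, obtain the additive identity $I(\text{larger set}) = I(\text{smaller set}) + (\text{conditional FI term})$, and conclude by non-negativity of the conditional Fisher information (the paper does this in one step for the whole complement $S\setminus S'$ rather than one sensor at a time, which is an immaterial difference). Your added justification of the non-negativity via the squared-score identity under regularity conditions is a welcome bit of extra care that the paper simply asserts.
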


\begin{proof} 
We need to show that $I(S) \geq I(S^{'})$, for $S^{'} \subseteq S$. According to the definition of average FI of coalition $S$ in (\ref{eq:FIofS})
\begin{eqnarray}
I(S)&=&-\mathbb{E}\left[\frac{\partial^2}{\partial \theta^2} \log f_{\X_S}(\x_S;\theta)\right] \nonumber \\
&=& -\mathbb{E}_{S^{'}}\left[\frac{\partial^2}{\partial \theta^2} \log f_{\X_{S^{'}}}(\x_{S^{'}};\theta) \right]+ \nonumber \\
&&\mathbb{E}_{{S^{'}}}\left[-\mathbb{E}_{{S\setminus S^{'}}|{S^{'}}} [\frac{\partial^2}{\partial \theta^2}\log f_{\X_{S\setminus S^{'}}}(\x_{S\setminus S^{'}}|\x_{S^{'}};\theta)]\right]  \nonumber \\
\label{decoupleIS}
\end{eqnarray}
where $S\setminus S^{'}$ denotes the relative complement of $S^{'}$ with respect to $S$, i.e., $\{n: n \in S, n \notin S^{'}\}$. It can be noted that the first term in \eqref{decoupleIS} corresponds to the average FI of $S^{'}$, and the second term is the expected conditional average FI of $S \setminus S^{'}$. Due to the non-negativity of conditional FI, we have
\begin{eqnarray}
I(S)&=& I(S^{'})+\mathbb{E}_{{S^{'}}}\left[ I(S\setminus S^{'}|{S^{'}})\right] \nonumber \\
&\geq&  I(S^{'}) 
\end{eqnarray}
\end{proof}
\begin{remark}
When the transmission cost is assumed to be zero, i.e., $E_t=0$, a grand coalition forms. It is proved in Proposition \ref{prop:nonde} that average FI does not decrease by including more sensors in a coalition. Thus, if there is no communication cost, all the sensors will collaborate for a better estimation performance. 
\end{remark}

It is clear from Proposition \ref{prop:nonde} and the definition of $E(S)=r (|S|-1)E_t$ that, as the coalition size increases, both estimation performance in terms of $I(S)$ and the energy consumption $E(S)$ increase with it. There is a tradeoff between the estimation performance and communication efficiency. Each sensor aims to maximize its estimation performance subject to an energy constraint. The problem is formulated as the following:
\begin{eqnarray}
&&\max_{\mathcal{S} \in \mathcal{P}} I_n(\mathcal{S}), \quad \forall n \in \mathcal{N} \nonumber \\
&&\text{s.t.} \quad r (|S|-1)E_t < \alpha, \quad \forall S \in \mathcal{S}
\end{eqnarray}
where $I_n(\mathcal{S})$ represents the average FI of sensor $n$ under partition $\mathcal{S}$, i.e. $I_n(\mathcal{S})=I_n(S)$, for $n \in S$ and $S \in \mathcal{S}$. 
%

\subsection{Diversity gain \& Redundancy loss }

%
To analyze the effect of inter-sensor dependence on the average FI for coalition $S$, we express the joint PDF of observations of sensors in coalition $S$ in terms of the marginal PDFs and copula density function $c_s$, as in \eqref{CopEq2}, using copula theory.
When $\log c_S(\cdot;\theta,\boldsymbol{\phi})$ is twice differentiable with respect to $\theta$, $I(S)$ can be written as 
\begin{eqnarray}
I(S)&=&-\mathbb{E} \left[  \frac{\partial^2  \log \left(  \prod_{n \in S}f_{n}(x_{n}; \theta) c_S(\cdot ;\theta,\boldsymbol{\phi})\right)}{\partial \theta^2} \right] \nonumber\\
&=&\sum_{n\in S} {I_n} - \mathbb{E}\left[\frac{ \partial^2\log c_S(\cdot ;\theta, \boldsymbol{\phi})} {\partial \theta^2}\right] \nonumber\\
&=&\sum_{n\in S} {I_n}+I_{c}(S)
\label{eq:fc}
\end{eqnarray}
where $I_n$ represents the average FI achieved by a single sensor $n$ in a non-collaborative setting, and  $I_{c}(S)$ represents the FI that is induced by the dependence structure $c_S$. Thus, the average FI for a coalition $S$ can be written as the summation of average FIs of each individual sensors in $S$ and $I_{c}(S)$. We call $I_{c}(S)$ the generalized average FI (GAFI) for the copula density function $c_S$ because it may not satisfy the non-negativity property of average FI. Figure \ref{fig:GAFI_RHO} shows the GAFI for a Gaussian copula as a function of the dependence parameter $\rho$. It is shown that for the case of identical marginal distributions, GAFI is nonpositive and decreases with an increase in $\rho$. More complicated behavior of GAFI is observed when marginal distributions are different as seen in Figure \ref{fig:GAFI_RHO}. 

\begin{figure}
\centering
\includegraphics[width=0.5\columnwidth]{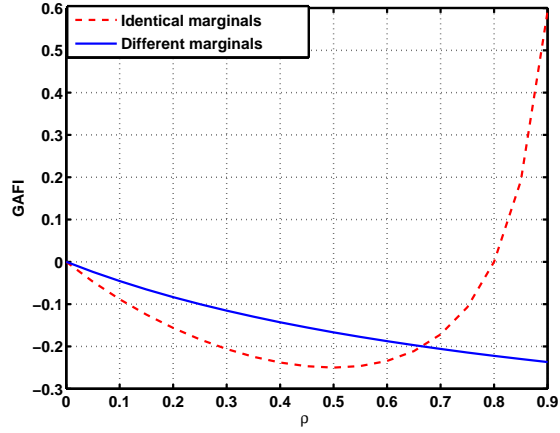}
\caption{GAFI corresponding to Gaussian copula vs. correlation coefficient $\rho$. (The marginal distributions are Gaussian. Identical marginals imply that the marginal distributions are the same, and heterogeneous marginals imply that the marginal distributions are different.)}
\label{fig:GAFI_RHO}
\end{figure}

The following proposition provides some insights into the properties of the GAFI for a two-sensor coalition. We assume the joint distribution to be bivariate Gaussian which can be written as a product of Gaussian marginals and a Gaussian copula.
\begin{prop}
Let the random vector $[X,Y]^T$ be bivariate Gaussian distributed, i.e., $[X,Y]^T \sim N(\boldsymbol{\mu}, \Sigma_{XY})$, where $\boldsymbol{\mu}=[\mu_X(\theta), \mu_Y(\theta)]^T$,
\[
\Sigma_{XY}=
\begin{pmatrix} 
\sigma_X^2 & \sigma_X \sigma_Y \rho_{XY} \\ 
\sigma_Y \sigma_X \rho_{YX} & \sigma_Y^2 
\end{pmatrix}  
\]
and $\theta$ is the parameter to be estimated (Without loss of generality, let $\left|{\frac{\sigma_X}{\sigma_Y}\frac{\mu_Y^{'}(\theta)}{\mu_X^{'}(\theta)} }\right| \leq 1$, where the derivatives are taken with respect to $\theta$), then we have: 
\begin{enumerate}
\item $I_{c}(X,Y) $, the GAFI of copula $c_{XY}$, is a convex function of $\rho_{XY}$ and $\min_{\rho_{XY}} I_{c}(X,Y)=-\frac{\mu_Y^{'2} (\theta)}{\sigma_Y^2}$ is reached at $\rho_{XY}=\frac{\sigma_X}{\sigma_Y}\frac{\mu_Y^{'}(\theta)}{\mu_X^{'}(\theta)}$; 
\item $I_{c}(X,Y) \leq 0$ when $\rho_{XY} $ lies between $0$ and 
$\frac{2 \mu_X^{'}(\theta) \mu_Y^{'}(\theta) \sigma_X \sigma_Y}{\mu_X^{'2} (\theta) \sigma_Y^2+ \mu_Y^{'2}(\theta) \sigma_X^2}$. 
\item When ${\frac{\sigma_X}{\sigma_Y}\frac{\mu_Y^{'}(\theta)}{\mu_X^{'}(\theta)} }=1$, $I_{c}(X,Y)  \geq 0$ for $\rho \in [-1,0]$ and $I_{c}(X,Y) < 0$ for $\rho \in (0,1]$. Furthermore, $I_{c}(X,Y) $ is a monotonically decreasing function of $\rho_{XY}$.
\end{enumerate}
\label{prop:bifc}
\end{prop}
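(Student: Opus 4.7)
The plan is to exploit the closed-form Fisher information for a Gaussian family with $\theta$-dependent mean and constant covariance: for $[X,Y]^\T \sim \mathcal{N}(\boldsymbol{\mu}(\theta), \Sigma_{XY})$ one has $I(X,Y) = \boldsymbol{\mu}'(\theta)^\T \Sigma_{XY}^{-1}\boldsymbol{\mu}'(\theta)$. Inverting the $2\times 2$ covariance matrix yields
\begin{equation*}
I(X,Y) = \frac{1}{1-\rho_{XY}^2}\left[\frac{\mu_X'^2}{\sigma_X^2} - \frac{2\rho_{XY}\mu_X'\mu_Y'}{\sigma_X\sigma_Y} + \frac{\mu_Y'^2}{\sigma_Y^2}\right],
\end{equation*}
and subtracting the marginal FIs $I_X = \mu_X'^2/\sigma_X^2$ and $I_Y = \mu_Y'^2/\sigma_Y^2$ per \eqref{eq:fc} gives, after simplification,
\begin{equation*}
I_c(X,Y) = \frac{(a+b)\rho_{XY}^2 - 2c\,\rho_{XY}}{1-\rho_{XY}^2},
\end{equation*}
where I set $a := \mu_X'^2/\sigma_X^2$, $b := \mu_Y'^2/\sigma_Y^2$, $c := \mu_X'\mu_Y'/(\sigma_X\sigma_Y)$, which satisfy the identity $c^2 = ab$.

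For part~(1), I would rewrite this rational function in partial-fraction form as
\begin{equation*}
I_c(X,Y) = \frac{(a+b)/2 - c}{1-\rho_{XY}} + \frac{(a+b)/2 + c}{1+\rho_{XY}} - (a+b).
\end{equation*}
The key algebraic observation is that both numerators are non-negative: $(a+b)/2 \pm c = (\sqrt{a} \mp \mathrm{sgn}(c)\sqrt{b})^2/2 \geq 0$, thanks to $c^2 = ab$. Since $1/(1-\rho_{XY})$ and $1/(1+\rho_{XY})$ are each convex on $(-1,1)$, $I_c$ is a non-negative combination of convex functions and hence convex in $\rho_{XY}$. Setting $I_c'=0$ reduces to the quadratic $c\rho^2 - (a+b)\rho + c = 0$, with roots $\rho = b/c$ and $\rho = a/c$ whose product is $1$. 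The hypothesis $|(\sigma_X/\sigma_Y)(\mu_Y'/\mu_X')| = |b/c| \leq 1$ places the first root inside $[-1,1]$ (the second then outside), so it is the unique interior minimizer. Direct substitution, using $c^2 = ab$, gives $I_c(X,Y)\bigl|_{\rho_{XY}=b/c} = -b = -\mu_Y'^2/\sigma_Y^2$.

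For part~(2), I would factor the numerator of $I_c$ as $\rho_{XY}[(a+b)\rho_{XY} - 2c]$. Since $1-\rho_{XY}^2 > 0$ on $(-1,1)$, the sign of $I_c$ coincides with that of this upward-opening parabola in $\rho_{XY}$, which is non-positive between its roots $0$ and $2c/(a+b)$; rewriting $2c/(a+b)$ in original variables yields exactly $2\mu_X'\mu_Y'\sigma_X\sigma_Y/(\mu_X'^2\sigma_Y^2 + \mu_Y'^2\sigma_X^2)$. For part~(3), the condition $(\sigma_X/\sigma_Y)(\mu_Y'/\mu_X') = 1$ forces $b/c = 1$, and combined with $c^2 = ab$ this yields $a = b = c$; the expression then collapses to $I_c(X,Y) = -2a\rho_{XY}/(1+\rho_{XY})$, which is non-negative on $[-1,0]$, negative on $(0,1]$, and has derivative $-2a/(1+\rho_{XY})^2 < 0$, establishing strict monotone decrease.

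The main obstacle I anticipate is the partial-fraction step in part~(1): recognizing that the coefficients $(a+b)/2 \pm c$ are automatically non-negative (a consequence of the identity $c^2 = ab$ forced by the Gaussian structure) is what converts convexity into a one-line observation instead of a messy second-derivative computation. Once $I_c$ is written as a rational function in $\rho_{XY}$, parts~(2) and~(3) reduce to direct factoring and specialization.
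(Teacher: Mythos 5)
Your proof is correct, and it arrives at exactly the same closed-form expression that anchors the paper's argument, namely $I_c(X,Y)=\bigl[(a+b)\rho_{XY}^2-2c\,\rho_{XY}\bigr]/(1-\rho_{XY}^2)$ with $a=\mu_X'^2/\sigma_X^2$, $b=\mu_Y'^2/\sigma_Y^2$, $c=\mu_X'\mu_Y'/(\sigma_X\sigma_Y)$; from that point on, your treatment of the stationary point, the roots $0$ and $2c/(a+b)$ in part (2), and the specialization $a=b=c$ in part (3) coincides with the paper's. The two places where you genuinely diverge are both improvements in self-containedness. First, you obtain the closed form from $I(X,Y)=\boldsymbol{\mu}'(\theta)^{\mathsf{T}}\Sigma_{XY}^{-1}\boldsymbol{\mu}'(\theta)$ minus the marginal Fisher informations, whereas the paper computes $-\mathbb{E}[\partial^2\log c_{XY}/\partial\theta^2]$ directly from the Gaussian copula density; these are the same computation packaged differently. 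Second, and more substantively, the paper merely asserts ``it can be shown that'' $\partial^2 I_c/\partial\rho_{XY}^2\ge 0$, while you actually prove convexity via the partial-fraction decomposition $I_c=\frac{(a+b)/2-c}{1-\rho_{XY}}+\frac{(a+b)/2+c}{1+\rho_{XY}}-(a+b)$ together with the observation that $c^2=ab$ forces both numerators to be perfect squares (up to the factor $1/2$), hence non-negative; this turns the omitted second-derivative computation into a one-line argument and is the nicest part of your write-up. (Two trivial remarks: the signs in your identity $(a+b)/2\pm c=(\sqrt{a}\mp\mathrm{sgn}(c)\sqrt{b})^2/2$ should be $(\sqrt{a}\pm\mathrm{sgn}(c)\sqrt{b})^2/2$, which does not affect the non-negativity conclusion; and the degenerate cases $c=0$ and $|b/c|=1$, where the interior stationary point collapses to $0$ or migrates to the boundary $\pm 1$, are glossed over in both your argument and the paper's.)
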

\begin{proof}
According to the definition of GAFI in (\ref{eq:fc})
\begin{eqnarray}
I_{c}(X,Y)&=& - \mathbb{E}\left[\frac{ \partial^2\log c_{XY}(F_X(x;\theta),F_Y(y;\theta) ;\rho_{XY})} {\partial \theta^2}\right] \nonumber\\
&=&\frac{-1}{\sigma_X^2 \sigma_Y^2 (1-\rho_{XY}^2)} \{ 2 \rho_{XY} \mu_X^{'}(\theta) \mu_Y^{'}(\theta) \sigma_X \sigma_Y  \nonumber \\
&&-\rho_{XY}^2 (\mu_X^{'2} (\theta) \sigma_Y^2+\mu_Y^{'2}(\theta) \sigma_X^2)\}
\label{eq:icxy}
\end{eqnarray}
It can be shown that \footnote{The dependence of $I_{c}(X,Y)$ on correlation coefficient $\rho_{XY}$ is not made explicit for notational convenience.}
\[
\frac{\partial^2I_{c}(X,Y)}{\partial \rho_{XY}^2} \geq 0, \quad \forall \rho_{XY} \in (-1,1)
\]
thus, $I_{c}(X,Y)$ is convex. By setting 
\[
\frac{\partial I_{c}(X,Y)}{\partial \rho_{XY}}=0
\]
and knowing that 
\[
\left|{\frac{\sigma_X}{\sigma_Y}\frac{\mu_Y^{'}(\theta)}{\mu_X^{'}(\theta)} }\right| \leq 1
\]
we get 
 \[
 \rho^*=\frac{\sigma_X}{\sigma_Y}\frac{\mu_Y^{'}(\theta)}{\mu_X^{'}(\theta)}, \quad I_{c}(X,Y)^*=-\frac{\mu_Y^{'2} (\theta)}{\sigma_Y^2}
\] 
Thus, the minimum of $I_{c}(X,Y)$ is obtained at $\rho^*=\frac{\sigma_X}{\sigma_Y}\frac{\mu_Y^{'}(\theta)}{\mu_X^{'}(\theta)}$, which is $I_{c}(X,Y)^*=-\frac{\mu_Y^{'2} (\theta)}{\sigma_Y^2}$.


By setting (\ref{eq:icxy}) equal to zero, we get two solutions: 
\[
\rho_1=0, \quad \rho_2=\frac{2 \mu_X^{'}(\theta) \mu_Y^{'}(\theta) \sigma_X \sigma_Y}{\mu_X^{'2} (\theta) \sigma_Y^2+ {\mu_Y (\theta)^{'}}^2 \sigma_X^2}
\]
Combined with the convexity of the function, it can be concluded that  $I_{c}(X,Y) \leq 0$ when $\rho_{XY} \in \left[ \min \{ \rho_1, \rho_2\}, \max \{\rho_1, \rho_2\} \right]$. 

By letting $\sigma_X \mu_Y^{'}(\theta)=\sigma_Y \mu_X^{'}(\theta)$ in (\ref{eq:icxy}), the conclusions in (3) can be directly derived.  
\end{proof}

\begin{remark}
When $\rho_{XY}=0$, $I_{c}(X,Y)=0$, meaning that the average FI of the coalition is solely the summation of individual average FIs of $X$ and $Y$; when $\rho_{XY}=\frac{\sigma_X}{\sigma_Y}\frac{\mu_Y^{'}(\theta)}{\mu_X^{'}(\theta)}$, $I_{c}(X,Y)$ is just the smaller individual average FI of the two sensors with a minus sign. In the latter case, the sensor with larger individual average FI gains nothing in estimation performance by collaboration. 
\end{remark}
\begin{remark}
In our formulation, a sensor $n$ prefers to collaborate with sensor $m$ with which it has a positive $I_{c}(X_n,X_m)$ than sensor $k$ with which it has a negative $I_{c}(X_n,X_k)$, when sensor $m$ and sensor $k$ have identical individual performances in terms of average FI. This is because to sensor $n$, sensor $m$ is more ``valuable''  than sensor $k$ in the sense that due to inter-sensor dependence, some of sensor $k$'s information is redundant for sensor $n$.

\end{remark}

\begin{mydef}
For $I_{c}(X,Y) < 0$, we define -$I_{c}(X,Y)$ to be pairwise redundancy loss denoted as $I_{rl}(X,Y)$,  otherwise we define $I_{c}(X,Y)$ to be pairwise diversity gain denoted as $I_{dg}(X,Y)$.
\end{mydef}

The definitions of diversity gain and redundancy loss allow for a better characterization of the different roles that pairwise inter-sensor dependence may play.  General properties of the GAFI of multivariate copulas can be  analyzed using \textit{vines} which is a graphical method of constructing multivariate copulas \cite{Cooke2006,Subramanian2011}. The joint PDF of $N$ random variables expressed in terms of a D-vine decomposition is given by: 
\begin{eqnarray}
\lefteqn{f_{\X}(\x)=} \nonumber \\
&\prod \limits_{n=1}^N f(x_n)\prod \limits_{j=1}^{N-1} \prod \limits_{k=1}^{N-j} c_{j,j+k|\overline{j}}(F(x_j|\x_{\overline{j}}),F(x_{j+k}|\x_{\overline{j}}))
 \label{eq:vine}
\end{eqnarray}
where $\overline{j}=[j+1, \dots, j+k-1]$ and $\x_{\overline{j}}=[x_{j+1}, \dots, x_{j+k-1}]$.
Thus, a multivariate copula is decomposed into the product of bivariate conditional copulas. Therefore, $I_{c}(S)$, the corresponding GAFI of the copula in any coalition $S$ can be written as:
\begin{eqnarray}
&I_{c}(S)&=\sum_{j=1}^{|S|-1} \sum_{k=1}^{|S|-j} I_{c}(X_j,X_{j+k}|\X_{\overline{j}}) \nonumber \\
&=&\sum_{j=1}^{|S|-1} \sum_{k=1}^{|S|-j} I_{dg}(X_j,X_{j+k}|\X_{\overline{j}}) \mathbbm{1}_{I_{c}(X_j,X_{j+k}|\X_{\overline{j}}) \geq 0 } \nonumber\\
 &&- \sum_{j=1}^{|S|-1} \sum_{k=1}^{|S|-j} I_{rl}(X_{j},X_{j+k}|\X_{\overline{j}}) \mathbbm{1}_{I_{c}(X_j,X_{j+k}|\X_{\overline{j}})<0}  \nonumber\\
 &=& I_{dg}(S)-I_{rl}(S)
\label{eq:decfc}
\end{eqnarray}
where $\mathbb{I}_{\{\cdot\}}$ denotes the indicator function, and $I_{dg}(S)$ and $I_{rl}(S)$ respectively represent the diversity gain and redundancy loss in the coalition $S$. Each of them is a summation of pairwise diversity gains or pairwise redundancy losses in coalition $S$.  Until now, we have quantified the benefit and cost of forming a coalition $S$ incurred by dependent sensor observations in the problem of distributed estimation.
 In the following section, the counterparts of diversity gain and redundancy loss for the distributed detection problem will be investigated. 

\section{Collaborative distributed detection}
\label{sec:dd}
In the detection problem, $\theta$ is a bi-valued variable which takes the value $\theta_0$ under hypothesis $H_0$ and takes the value $\theta_1$ under hypothesis $H_1$. In this paper, we employ Kullback-Leibler Divergence (KLD) as the performance metric. KLD can be interpreted as the error exponent in the Neyman-Pearson framework, which means that the probability of miss detection goes to zero exponentially with the number of observations at a rate equal to KLD. Thus, KLD characterizes the asymptotic detection performance. We denote KLD by $D$ and define it as follows 
\begin{eqnarray}
D=\mathbb{E}_{H_0}\left[ \log \frac{f_{\X}(\x|H_0)}{f_{\X}(\x|H_1)}\right]
\end{eqnarray}
where $\mathbb{E}_{H_0}[\cdot]$ denotes the expectation taken with respect to the joint distribution of $\X$ under hypothesis $H_0$. For a coalition $S$, the detection performance it can achieve, in terms of KLD, is $D(S)$. 
\begin{eqnarray}
D(S)=\mathbb{E}_{X_S|H_0}\left[ \log \frac{f_{\X_S}(\x_S|H_0)}{f_{\X_S}(\x_S|H_1)}\right]
\end{eqnarray}
\begin{remark}
The KLD for sensor $n$, i.e., $D_n(S)$, in a coalition $S$ is the same for all $n \in S$. Similar to Remark \ref{remark:FI} for the estimation problem, we can write
\[
D_n(S) = D(S), ~\forall n \in S
\].
\label{remark:D}
\end{remark}
\begin{prop}
D(S) is nondecreasing in $|S|$.
\label{prop:kldnonde}
\end{prop}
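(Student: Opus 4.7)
The plan is to mirror the proof of Proposition \ref{prop:nonde} by applying the chain rule for Kullback-Leibler divergence to a superset/subset decomposition, then invoking the nonnegativity of (conditional) KLD.

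First, I would pick an arbitrary $S' \subseteq S$ and factor the joint densities under each hypothesis in the usual way: $f_{\X_S}(\x_S|H_i) = f_{\X_{S'}}(\x_{S'}|H_i)\, f_{\X_{S\setminus S'}}(\x_{S\setminus S'}\mid \x_{S'}, H_i)$ for $i=0,1$. Plugging this factorization into the definition of $D(S)$, the log ratio splits as a sum of two terms, and by linearity of expectation I would write
\begin{equation*}
D(S) = \mathbb{E}_{\X_{S'}|H_0}\!\left[\log \tfrac{f_{\X_{S'}}(\x_{S'}|H_0)}{f_{\X_{S'}}(\x_{S'}|H_1)}\right] + \mathbb{E}_{\X_{S'}|H_0}\!\left[\mathbb{E}_{\X_{S\setminus S'}|\X_{S'}, H_0}\!\left[\log \tfrac{f(\x_{S\setminus S'}|\x_{S'},H_0)}{f(\x_{S\setminus S'}|\x_{S'},H_1)}\right]\right].
\end{equation*}
The first term is exactly $D(S')$, and the inner expectation in the second term is a conditional KLD between the two conditional densities of $\X_{S\setminus S'}$ given $\X_{S'}$ under $H_0$ and $H_1$.

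Next, I would invoke the nonnegativity of KLD (Gibbs' inequality) applied pointwise for each realization $\x_{S'}$: the inner conditional KLD is $\geq 0$ for every $\x_{S'}$, so its outer expectation under $H_0$ is also $\geq 0$. This yields $D(S) \geq D(S')$, which is the claim, and the result follows by choosing $S'$ to be any subcoalition of $S$.

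There is no substantive obstacle here: the chain rule for KLD and its nonnegativity are standard. The only care required is to ensure that the conditional densities are well-defined (i.e., absolute continuity of the conditional distribution under $H_1$ with respect to that under $H_0$ on the support of interest), which is implicit in the problem setup since $D(S)$ is assumed finite and the same marginal/copula representation is used under both hypotheses. As a side benefit, this decomposition exactly parallels \eqref{decoupleIS} and makes explicit the ``conditional diversity'' contributed by $S\setminus S'$, which motivates the copula-based diversity/redundancy decomposition for detection developed later in Section \ref{sec:dd}.
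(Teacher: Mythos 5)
Your proposal is correct and follows essentially the same route as the paper: the chain-rule factorization of the joint densities under each hypothesis, splitting $D(S)$ into $D(S')$ plus the expected conditional KLD of $\X_{S\setminus S'}$ given $\X_{S'}$, and concluding via the nonnegativity of (conditional) KLD. The paper's proof is exactly this decomposition, so no further comparison is needed.
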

\begin{proof}
In order to prove that $D(S)$ does not decrease by including new members to the existing coalition, we need to show that for any $S^{'} \subseteq S$, $D(S^{'}) \leq D(S)$.
\begin{eqnarray}
&&D(S) = \int \log \frac{f_{\X_S}(\x_S|H_0)}{f_{\X_S}(\x_S|H_1)} f_{\X_S}(\x_S|H_0) \mathrm{d}\x_S \nonumber \\
&& = \int \log \frac{f_{\X_{S^{'}}}(\x_{S^{'}}|H_0)}{f_{\X_{S^{'}}}(\x_{S^{'}}|H_1)} f_{\X_{S^{'}}}(\x_{S^{'}}|H_0)  \mathrm{d}\x_{S^{'}} \nonumber \\
&& + \int \log \frac{f_{\X_{S\setminus S^{'}}}(\x_{S\setminus S^{'}}|\x_{S^{'}},H_0) }{f_{\X_{S\setminus S^{'}}}(\x_{S\setminus S^{'}}|\x_{S^{'}},H_1) } f_{\X_S}(\x_S|H_0) \mathrm{d}\x_{S} \nonumber \\
&& = D(S^{'})+\mathbb{E}_{\X_{S^{'}}|H_0}\left[D(S\setminus S^{'}) \right] \nonumber \\
&& \geq D(S^{'})
\end{eqnarray}
The last inequality is because of the non-negativity property of conditional KLD.
\end{proof}

\begin{remark}
A grand coalition forms when communication cost is zero, i.e., $E_t=0$. 
\end{remark}

It is noted that, as $|S|$ increases, both $D(S)$ and $E(S)$ increase, indicating a tradeoff between the detection performance and energy consumption. In our formulation, each sensor selfishly aims to maximize its own detection performance, i.e., the KLD that can be obtained using shared observations within the coalition to which it belongs, subject to an energy constraint. The problem can be formulated as the following
\begin{eqnarray}
&&\max_{\mathcal{S} \in \mathcal{P}} D_n(\mathcal{S}), \quad \forall n \in \mathcal{N} \nonumber \\
&&\text{s.t.} \quad r (|S|-1)E_t < \alpha, \quad \forall S \in \mathcal{S}
\end{eqnarray}
where $D_n(\mathcal{S})$ represents the KLD of sensor $n$ under partition $\mathcal{S}$, i.e., $D_n(\mathcal{S})=D_n(S)$, for $S \in \mathcal{S}$ and $n \in S$ . 
\subsection{Diversity Gain and Redundancy Loss}
The effect of inter-sensor dependence on the KLD can be analyzed by expressing the joint PDF of the observations of sensors in coalition $S$ in terms of the marginal PDFs and copula density function $c_s$.
By copula theory, the KLD corresponding to $\X_S$ can be written as 
\begin{eqnarray}
\lefteqn{D(S) =} \nonumber \\
&&\int \log \frac{ \prod \limits_{n \in S}f_{n}(x_{n}|H_0) c_S(\cdot |\boldsymbol{\phi}_0,H_0)}{\prod \limits_{n \in S}f_{n}(x_{n}|H_1) c_S(\cdot |\boldsymbol{\phi}_1,H_1)} f_{\X_S}(\x_S|H_0) \mathrm{d}\x_S \nonumber \\
&& = \sum_{n \in S} D_n +\mathbb{E}_{\X_S|H_0}\left[ \log \frac{c_S\left(F_{n}(x_{n}|H_0), \forall n \in S |\boldsymbol{\phi}_0,H_0\right)}{c_S\left(F_{n}(x_{n}|H_1), \forall n \in S |\boldsymbol{\phi}_1,H_1\right)} \right]\nonumber \\
&& = \sum_{n \in S} D_n+ D_{c}(S)
\label{decomd}
\end{eqnarray}
where $D_n$ is the KLD achieved by sensor $n$ with its own observations in a non-collaborative setting and $\boldsymbol{\phi}_i$ is the dependence parameter of the copula density under hypothesis $H_i$, $i = 0,1$. 
The KLD between the two joint distributions of sensor observations in $S$ under hypotheses $H_0$ and $H_1$ can be decomposed into two terms, as shown in \eqref{decomd}. The first term represents the summation of KLDs corresponding to individual sensors in $S$ and the second term $D_{c}(S)$ measures the \textit{distance} between the two joint distributions introduced by the dependence structure. We call $D_{c}(S)$ the Generalized KLD (GKLD), because the arguments of $c_S(\cdot|H_0)$ and $c_S(\cdot|H_1)$ are different and thus violate the standard definition of KLD. In Figure \ref{fig:GKLD_TAU}, the GKLDs corresponding to different copulas are plotted against Kendall's $\tau$. Similar trend is observed among these curves. For each curve, there exist a single $\tau^*$ that divides $\tau \in [0,1]$ into two intervals, each corresponding to positive or negative GKLD. Since Kendall's $\tau$ is only a scalar summarization of the ``amount" of dependence, the behaviors of GKLDs vary for different copula models (structures of the dependence). 

\begin{figure}
\centering
\includegraphics[width=0.5\columnwidth]{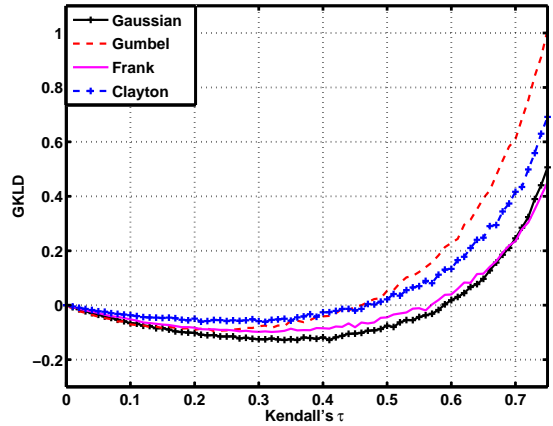}
\caption{GKLD corresponding to different copulas vs. Kendall's $\tau$: Gaussian marginals are assumed.}
\label{fig:GKLD_TAU}
\end{figure}

The following proposition provides insights into the GKLD in a coalition consisting of two sensors whose observations follow bivariate Gaussian distribution which can be viewed as the product of two univariate Gaussian PDFs and a Gaussian copula.
\begin{prop}
Consider two random variables $[X,Y]^T \sim N([\theta_1, \theta_1], \Sigma_{XY})$ under hypothesis $H_1$ and $[X,Y]^T \sim N([\theta_0, \theta_0], \Sigma_{XY})$ under hypothesis $H_0$, where
\[
\Sigma_{XY}=
\begin{pmatrix} 
\sigma_X^2 & \sigma_X \sigma_Y \rho_{XY} \\ 
\sigma_Y \sigma_X \rho_{YX} & \sigma_Y^2 
\end{pmatrix}  
\]
and $\theta_1 \neq \theta_0$. Without loss of generality, let $\sigma_X \leq \sigma_Y$, then we have: 
\begin{enumerate}
\item $D_{c}(X,Y) $, the GKLD corresponding to the Gaussian copula $c_{XY}$, is a convex function of $\rho_{XY}$ and $\min_{\rho_{XY}} D_{c}(X,Y)=-\frac{(\theta_1-\theta_0)^2}{2 \sigma_Y^2}$ is reached at $\rho_{XY}=\frac{\sigma_X}{\sigma_Y}$; \\
\item $D_{c}(X,Y) \leq 0$ for $\rho_{XY} $ between $0$ and $\frac{2\sigma_X \sigma_Y}{ \sigma_Y^2+ \sigma_X^2}$. 
\item  For $\sigma_X=\sigma_Y$, $D_{c}(X,Y) \geq 0$ for $\rho_{XY} \in [-1,0]$ and  $D_{c}(X,Y) < 0$ for $\rho_{XY} \in (0,1]$ and it is a monotone decreasing function of $\rho_{XY}$,.
\end{enumerate}
\label{prop:bigkld}
\end{prop}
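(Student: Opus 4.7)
The plan is to mirror the proof of Proposition \ref{prop:bifc}: obtain a closed-form expression for $D_c(X,Y)$ as a function of $\rho_{XY}$, and then read off convexity, the minimizer, the zero crossings, and the sign from that single expression.

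First, I would specialize the decomposition in \eqref{decomd} to $|S|=2$, writing $D_c(X,Y) = D(X,Y) - D_X - D_Y$. Because under both hypotheses the distributions are Gaussians that share the same (co)variance and differ only in the mean, each of these three KLDs is in the standard ``shared-covariance'' form. In particular,
$$D(X,Y) = \tfrac{1}{2}(\boldsymbol{\mu}_1-\boldsymbol{\mu}_0)^T \Sigma_{XY}^{-1}(\boldsymbol{\mu}_1-\boldsymbol{\mu}_0), \quad D_X = \frac{(\theta_1-\theta_0)^2}{2\sigma_X^2}, \quad D_Y = \frac{(\theta_1-\theta_0)^2}{2\sigma_Y^2}.$$
Substituting $\boldsymbol{\mu}_1-\boldsymbol{\mu}_0 = (\theta_1-\theta_0)[1,1]^T$ together with the explicit inverse of $\Sigma_{XY}$ and simplifying collapses everything to the compact form
$$D_c(X,Y) \;=\; \frac{(\theta_1-\theta_0)^2\,\rho_{XY}\bigl[(\sigma_X^2+\sigma_Y^2)\rho_{XY} - 2\sigma_X\sigma_Y\bigr]}{2\sigma_X^2\sigma_Y^2\,(1-\rho_{XY}^2)}.$$

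Second, all three claims follow from this expression by elementary calculus on $\rho_{XY}\in(-1,1)$. For convexity, I would apply partial fractions to rewrite the formula as $D_c(X,Y) = C\bigl[-A + \tfrac{(\sigma_X-\sigma_Y)^2/2}{1-\rho_{XY}} + \tfrac{(\sigma_X+\sigma_Y)^2/2}{1+\rho_{XY}}\bigr]$ with $A=\sigma_X^2+\sigma_Y^2$ and $C=(\theta_1-\theta_0)^2/(2\sigma_X^2\sigma_Y^2)>0$; differentiating twice yields a strictly positive combination of $1/(1-\rho_{XY})^3$ and $1/(1+\rho_{XY})^3$, giving strict convexity. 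Setting $\partial D_c/\partial \rho_{XY}=0$ reduces to a quadratic whose discriminant is $(\sigma_X^2-\sigma_Y^2)^2$; its two roots are $\sigma_X/\sigma_Y$ and $\sigma_Y/\sigma_X$, and only the former lies in $[-1,1]$ by the hypothesis $\sigma_X\leq\sigma_Y$. Substituting $\rho^\star=\sigma_X/\sigma_Y$ back into the closed form gives $D_c^\star=-(\theta_1-\theta_0)^2/(2\sigma_Y^2)$, proving (1). For (2), the zeros of $D_c$ are exactly the zeros of the numerator factor $\rho_{XY}\bigl[(\sigma_X^2+\sigma_Y^2)\rho_{XY}-2\sigma_X\sigma_Y\bigr]$, namely $\rho_{XY}=0$ and $\rho_{XY}=2\sigma_X\sigma_Y/(\sigma_X^2+\sigma_Y^2)\in(0,1]$ (by AM--GM, with equality iff $\sigma_X=\sigma_Y$); combined with convexity and the interior minimizer $\rho^\star$ lying between them, this forces $D_c\leq 0$ precisely on the interval joining these two zeros.

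For (3), setting $\sigma_X=\sigma_Y=\sigma$ collapses the closed form to $D_c(X,Y) = -(\theta_1-\theta_0)^2\rho_{XY}/[\sigma^2(1+\rho_{XY})]$, from which the sign claims on $[-1,0]$ and $(0,1]$ are immediate, and $\partial D_c/\partial\rho_{XY} = -(\theta_1-\theta_0)^2/[\sigma^2(1+\rho_{XY})^2]<0$ gives monotone decrease. The main obstacle is not conceptual but algebraic: the three KLD terms depend on $\rho_{XY}$ only through $\Sigma_{XY}^{-1}$, and collapsing them into the single quadratic-over-$(1-\rho_{XY}^2)$ form above requires careful bookkeeping, with $\sigma_X\leq\sigma_Y$ playing a double role (selecting the admissible critical point and ensuring the second zero lies inside $(0,1]$). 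Once the closed form is in hand, convexity, the minimum, the zeros, and the sign claims are all short calculations that parallel the proof of Proposition \ref{prop:bifc}.
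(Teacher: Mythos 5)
Your proposal is correct and follows essentially the same route as the paper: both reduce the claim to the closed form $D_{c}(X,Y)=\frac{(\theta_1-\theta_0)^2}{2\sigma_X^2\sigma_Y^2(1-\rho_{XY}^2)}\left[(\sigma_X^2+\sigma_Y^2)\rho_{XY}^2-2\sigma_X\sigma_Y\rho_{XY}\right]$ and then obtain convexity, the minimizer $\rho^{*}=\sigma_X/\sigma_Y$ with minimum $-(\theta_1-\theta_0)^2/(2\sigma_Y^2)$, the zeros $0$ and $2\sigma_X\sigma_Y/(\sigma_X^2+\sigma_Y^2)$, and the equal-variance specialization by elementary calculus. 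The only (immaterial) differences are that you derive the closed form via $D_{c}=D(X,Y)-D_X-D_Y$ using the shared-covariance Gaussian KLD formula, whereas the paper evaluates $\mathbb{E}_{H_0}\!\left[\log\left(c(\cdot|H_0)/c(\cdot|H_1)\right)\right]$ directly, and that your partial-fraction decomposition makes explicit the convexity step the paper merely asserts with ``it can be shown.''
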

\begin{proof}
According to the definition of GKLD, we have
\begin{eqnarray}
\lefteqn{D_{c}(X,Y)} \nonumber \\
&&=\mathbb{E}_{XY|H_0}\left[ \log \frac{c\left(F_{X}(x|H_0),F_{Y}(y|H_0) |\Sigma_{XY}\right)}{c\left(F_{X}(x|H_1),F_{Y}(y|H_1) |\Sigma_{XY}\right)} \right] \nonumber \\
&&=\frac{(\theta_1-\theta_0)^2}{2 \sigma_X^2 \sigma_Y^2 (1-\rho_{XY}^2)} \left[ \rho_{XY}^2 ( \sigma_Y^2+\sigma_X^2)- 2 \rho_{XY}  \sigma_X \sigma_Y  \right]
\label{eq:fcb}
\end{eqnarray}
where $c(\cdot|\Sigma_{XY})$ represents the Gaussian copula parameterized by $\boldsymbol{\phi} =\Sigma_{XY}$.
It can be shown that 
\[
\frac{\partial^2D_{c}(X,Y)}{\partial \rho_{XY}^2} \geq 0,  \quad \forall \rho_{XY} \in (-1,1)
\]
Thus, the convexity is proved. By setting 
\[
\frac{\partial D_{c}(X,Y)}{\partial \rho_{XY}}=0
\]
and knowing that $\sigma_X \leq \sigma_Y$, we get 
\[
\rho^*=\frac{\sigma_X}{\sigma_Y}, \quad D_{c}(X,Y)^*=-\frac{(\theta_1-\theta_0)^2}{2 \sigma_Y^2}
\]
combining with the convexity of the function, we know that $D_{c}(X,Y)^*$ is the minimum point. 

By setting (\ref{eq:fcb}) equal to zero, we get two solutions: 
\[
\rho_1=0, \quad \rho_2=\frac{2 \sigma_X \sigma_Y}{\sigma_Y^2+  \sigma_X^2} 
\]
knowing the convexity of the function, it can be concluded that  $D_{c}(X,Y) \leq 0$ when $\rho_{XY} \in \left[ \rho_1, \rho_2 \right]$.  

When $\sigma_X=\sigma_Y=\sigma$, 
\[
D_{c}(X,Y)=-\frac{(\theta_1-\theta_0)^2}{\sigma^2} \frac{\rho_{XY}}{1+\rho_{XY}}
\]
It can be shown that $\frac{\partial D_{c}(X,Y)}{\partial \rho_{XY}} \leq 0, \forall \rho_{XY} \in (-1,1)$ and the sign of $D_{c}(X,Y)$ is the same as that of $-\rho_{XY}$.
\end{proof}
\begin{remark}
When $X$ and $Y$ are independently distributed, i.e., $\rho_{XY}=0$, then $D_{c}(X,Y)=0$, meaning that KLD is solely the summation of individual KLDs of $X$ and $Y$; when $\rho_{XY}=\frac{\sigma_X}{\sigma_Y}$, $D_{c}(X,Y)$ is just the smaller individual KLD of the two with a minus sign. In the latter case, the sensor with larger KLD does not improve its detection performance by forming a coalition with the other sensor. 
\end{remark}
\begin{remark}
For three sensors having dependent observations, a sensor $n$ would prefer to collaborate with sensor $m$ with which it has a positive $D_{c}(X_n, X_m)$ than sensor $k$ with which it has a negative $D_{c}(X_n, X_k)$, when sensor $m$ and $k$ have identical individual performance, i.e., $D_m=D_k$. This is because to sensor $n$, sensor $m$ is more ``valuable'' than sensor $k$ in the sense that the dependence between sensor $n$ and $m$ results in a larger total KLD, and thus contributes to a better asymptotic detection performance. 
\end{remark}

\begin{mydef}
For $D_{c}(X,Y) < 0$, we define -$D_{c}(X,Y)$ to be pairwise redundancy loss of GKLD, denoted as $D_{rl}(X,Y)$,  otherwise we define $D_{c}(X,Y)$ to be pairwise diversity gain of GKLD denoted as $D_{dg}(X,Y)$.
\end{mydef}

Although the expressions of the pairwise redundancy loss and diversity gain depend on the specific problem that we are considering, these definitions capture the intrinsic characteristics of a sensor network with dependent observations and quantify the impact of the dependence in collaboration.

According to (\ref{eq:vine}), a multivariate copula is decomposed into the product of bivariate conditional copulas. Therefore, $D_{c}(S)$, the GKLD introduced by the copula in any coalition $S$ can be written as:
\begin{eqnarray}
&&D_{c}(S)=\sum_{j=1}^{|S|-1} \sum_{k=1}^{|S|-j} D_{c}(X_j,X_{j+k}|\X_{\overline{j}}) \nonumber \\
&&=\sum_{j=1}^{|S|-1} \sum_{k=1}^{|S|-j} D_{dg}(X_j,X_{j+k}|\X_{\overline{j}}) \mathbbm{1}_{D_{c}(X_j,X_{j+k}|\X_{\overline{j}}) \geq 0} \nonumber\\
 &&- \sum_{j=1}^{|S|-1} \sum_{k=1}^{|S|-j} D_{rl}(X_j,X_{j+k}|\X_{\overline{j}}) \mathbbm{1}_{D_{c}(X_j,X_{j+k}|\X_{\overline{j}})<0}  \nonumber\\
 &&= D_{dg}(S)-D_{rl}(S)
\label{eq:decfc}
\end{eqnarray}
$D_{dg}(S)$ represents the diversity gain in the coalition $S$ and $D_{rl}(S)$ represents the amount of redundant information included in coalition $S$. By noting that $D_{dg}(S)$ and $D_{rl}(S)$ are nonnegative and nondecreasing function of $|S|$, we can view $D_{dg}(S)$ together with $\sum_{n\in S} D_n $ as the gain of forming $S$, while $D_{rl}(S)$ as the cost, along with the communication cost $E(S)$. In the following section, a coalition formation game for distributed inference is formulated based on the quantification of dependence-based diversity gain and redundancy loss.

\section{Game Formulation and Properties}
\label{sec:gf}
We propose a coalitional game defined by the pair $(\mathcal{N}, V)$ to model our collaborative inference problem, where $\mathcal{N}$ is the set of players (all sensors) and $V$ is a mapping such that for every coalition $S$, $V(S)$ is a closed convex subset of $\mathbb{R}^{S}$ that contains the payoffs that players in $S$ can achieve. In order to present a generalized game theoretical approach to the distributed inference problem, we use a unified notation $\Delta$ to represent the average FI in the estimation problem and the KLD in the detection problem. We define the value of a coalition $v(S)$, as an increasing function of the gain $\sum_{n \in S}\Delta_n+\Delta_{dg}(S)$ and a decreasing function of the costs $\Delta_{rl}(S)$, and $E(S)$:
\begin{eqnarray}  
\label{eq:uf}
v(S)=\left[\sum_{n \in S}\Delta_n+\Delta_{dg}(S)\right]-\left[ \Delta_{rl}(S)+C(S) \right]
\end{eqnarray}
where $C(S)$ is a function of the energy consumption $E(S)$. It captures the tradeoff between inference performance and the energy consumption. There are certain properties that a well designed cost function $C(S)$ should satisfy, here we use the logarithmic barrier penalty function given in~\cite{boyd2004}
\begin{eqnarray}
C(S)=\left\{ \begin{array}{rl}
-1/t \cdot \log(1-\frac{E(S)}{\alpha}) &  \mbox{ if $E(S) < \alpha$} \\
+ \infty & \mbox{ otherwise}
       \end{array} \right.
\label{ccf}
\end{eqnarray}
where $\alpha$ is the constraint on $E(S)$, and $t$ is a control parameter. The above cost function is an increasing function of $E(S)$ for $E(S) < \alpha$, while it goes to infinity when $E(S) \geq \alpha$. Through the cost function in \eqref{ccf}, the constraint that $E(S) < \alpha$ is enforced, since for the coalitions that do not satisfy this constraint, the utility $v(S)$ is $-\infty$.

\begin{prop}
The payoff for each sensor in coalition $S$ is equal to the utility of the coalition, i.e., $\Phi_n(S)=v(S), \forall n \in S$, where $\Phi_n(S)$ denotes the payoff of sensor $n$ in the coalition $S$. 
\end{prop}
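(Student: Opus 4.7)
The plan is to reduce the proposition to the symmetry observations already established in Remark \ref{remark:FI} and Remark \ref{remark:D}, combined with the copula-based decomposition of $\Delta(S)$ derived in Sections \ref{sec:cde}--\ref{sec:dd}. The payoff $\Phi_n(S)$ should be interpreted as the net benefit sensor $n$ obtains by belonging to coalition $S$, namely its attained inference performance $\Delta_n(S)$ minus the cost $C(S)$ it incurs from the energy it spends serving the coalition. The claim is then that this net benefit is the same for every $n\in S$ and equals $v(S)$.

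First I would invoke Remarks \ref{remark:FI} and \ref{remark:D}, which, under the rules of the network (every member of $S$ has access to the full observation vector $\x_S$), give $\Delta_n(S)=\Delta(S)$ for all $n\in S$. Next I would apply the copula decomposition of the joint log-likelihood: using \eqref{eq:fc} in the estimation case and \eqref{decomd} in the detection case, together with the D-vine splitting \eqref{eq:decfc}, we can write
\begin{equation*}
\Delta(S)=\sum_{n\in S}\Delta_n+\Delta_{c}(S)=\sum_{n\in S}\Delta_n+\Delta_{dg}(S)-\Delta_{rl}(S).
\end{equation*}
Hence the common per-sensor inference performance inside $S$ is exactly the first bracket of \eqref{eq:uf} minus $\Delta_{rl}(S)$.

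Next I would address the cost side. The per-sensor energy consumption $E(S)=r(|S|-1)E_t$ is, by construction, symmetric across members of $S$: each sensor in $S$ both issues $r$ requests per unit time and has to respond to $r(|S|-1)$ requests. Consequently $C(S)$ as defined in \eqref{ccf} is the same scalar cost charged to every member of $S$, and in particular the energy constraint $E(S)<\alpha$ applies uniformly so that $C(S)<\infty$ whenever the coalition is feasible.

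Combining the two pieces, for every $n\in S$,
\begin{equation*}
\Phi_n(S)=\Delta_n(S)-C(S)=\Delta(S)-C(S)=\Bigl[\sum_{n\in S}\Delta_n+\Delta_{dg}(S)\Bigr]-\Bigl[\Delta_{rl}(S)+C(S)\Bigr]=v(S),
\end{equation*}
which proves the proposition. There is no real obstacle here beyond bookkeeping; the only substantive point is recognizing that the result is essentially a restatement of the symmetry already embedded in the model (all coalition members see the same joint observations and incur the same communication load), and that the copula decomposition aligns term-by-term with the definition of $v(S)$ in \eqref{eq:uf}. The main conceptual remark worth emphasizing in the write-up is that this equality is precisely what makes the game a non-transferable-utility game of a particularly simple (identical-payoff) form, so that coalition-level preferences and individual preferences coincide, which is what justifies comparing partitions by Pareto order in the merge-and-split analysis to follow.
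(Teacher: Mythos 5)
Your proof is correct and follows essentially the same route as the paper's: both arguments rest on the symmetry from Remarks \ref{remark:FI} and \ref{remark:D} (every member of $S$ attains the coalition's common FI or KLD) together with the observation that the per-sensor communication cost $E(S)$, and hence $C(S)$, is identical across members, so each sensor's net payoff coincides with $v(S)$. The paper's own proof is just a terser version of this bookkeeping; your extra step of unfolding the copula decomposition to match \eqref{eq:uf} term by term is harmless but not needed.
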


\begin{proof}
The value of a coalition $S$ defined in (\ref{eq:uf}) is a function of its inference performance and its average energy consumption $E(S)$. According to Remarks \ref{remark:FI} and \ref{remark:D}, the average FI or KLD for every sensor in $S$ is given by the average FI and KLD of the coalition. And it is known that transmission cost $E(S)$ of every sensor in $S$ is the average transmission cost of the coalition. Hence, the coalition value $v(S)$ is also the payoff of each player in it. 
\end{proof}

Now, we have a nontransferable utility coalitional game $(\mathcal{N}, V)$, where $V(S)$ is a singleton set (hence closed and convex)
\begin{eqnarray}
V(S):= \{ \boldsymbol{\Phi}(S)| \Phi_n(S) = v(S), \forall n \in S\}
\end{eqnarray}
A distributed algorithm for the above coalitions formation game among sensors is described next. 

\subsection{Coalition formation algorithm}

For autonomous coalition formation,
we propose a distributed algorithm based on two simple
rules called \textit{merge} and \textit{split}~\cite{Apt_ageneric} that allow us to modify a
partition $\mathcal{S}$ of the set $\mathcal{N}$.

Merge Rule: Merge any set of coalitions $\{S_1, \dots, S_m\}$, where  $\{\cup_{j=1}^m S_j\} \triangleright \{S_1, \dots, S_m\}$, therefore, $\{S_1, \dots, S_m\} \to \{\cup_{j=1}^m S_j\}$.

Split Rule: Split any coalition $\{\cup_{j=1}^m S_j\}$, where $\{S_1, \dots, S_m\} \triangleright \{ \cup_{j=1}^m S_j\}$, thus $\{\cup_{j=1}^m S_j\} \to \{S_1, \dots, S_m\}$.

\begin{remark}
Every iteration of the merge and split rules terminates.
\end{remark}


Let us assume that the dependence information is known at the local sensors,  and they autonomously form coalitions through merge and split operations. Let the initial partition be $\mathcal{S}=\{ S_1, \dots, S_m\}$.

\begin{algorithmic}

\Repeat {\\
~~$\mathcal{R}~=~\text{Merge}(\mathcal{S})$: coalitions in $\mathcal{S}$ merge according to the merge rule, until no further merge occurs\\
~~$\mathcal{S}~=~\text{Split}(\mathcal{R})$: coalitions in $\mathcal{R}$ split according to the split rule, until no further split occurs.

}
\Until{No merge or split occurs}
\end{algorithmic}

Merge operations are first applied. Given an initial partition $\mathcal{S}= \{S_1, \dots, S_m\}$, suppose $S_1$ seeks to collaborate with $S_2$. If the condition for merge is satisfied, a new coalition $S_1:=S_1 \cup S_2$ is formed, otherwise, $S_1:=S_1$ and $S_1$ attempts to merge with another coalition who shares a mutual benefit in merging. The algorithm is repeated for the
remaining $S_i$ until all the coalitions have made their
merge decisions. The resulting partition $\mathcal{R}$ is then subject to a split process in a similar way. Then, successive merge-and-split processes go on until the iterations terminate.

The stability of this resulting network structure can be
investigated using the concept of a defection function $\mathbb{D}$~\cite{Apt_ageneric, Zhu2009}.
\begin{mydef}
A defection function $\mathbb{D}$ is a function which is associated with each partition $\mathcal{T}$. A partition $\mathcal{T} = \{T_1, \dots, T_m\}$ is $\mathbb{D}$-stable if no group of players is interested in leaving $\mathcal{T}$ when the players who leave can only form the coalition allowed by $\mathbb{D}(\mathcal{T})$. 
\end{mydef}

A partition $\mathcal{T}= \{T_1, \dots, T_m\}$ of $\mathcal{N}$ is $\mathbb{D}_{hp}$-stable, if no players in $\mathcal{T}$ are interested in leaving $\mathcal{T}$ through \textit{merge-and-split} to form other partitions in $\mathcal{N}$. A partition $\mathcal{T}$ is $\mathbb{D}_c$-stable, if no players in $\mathcal{T}$ are interested in leaving $\mathcal{T}$ through \textit{any operation} to form other collections in $\mathcal{N}$~\cite{Zhu2009}. 

$\mathbb{D}_{hp}$-stable can be thought of as a state of equilibrium where no coalitions have an incentive to pursue coalition formation through merge or split. The following theorem has been proved in~\cite{DBLP:journals/corr/abs-cs-0605132}. 
\begin{theorem}
A partition is $\mathbb{D}_{hp}$-stable if and only if it is the outcome of iterating the merge-and-split rules. 
\end{theorem}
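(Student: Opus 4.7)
The plan is to handle the two directions separately, leveraging the earlier remark that the merge-and-split iteration terminates in finitely many steps, together with the definitions of merge/split (which are driven by the Pareto comparison relation $\triangleright$) and of $\mathbb{D}_{hp}$-stability.

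First I would establish the \emph{only if} direction. Let $\mathcal{T}$ be any $\mathbb{D}_{hp}$-stable partition. By definition, no subset of players in $\mathcal{T}$ has an incentive to deviate via any sequence of merge or split operations. Take $\mathcal{T}$ itself as the initial partition fed to the algorithm. Since $\mathcal{T}$ is $\mathbb{D}_{hp}$-stable, no admissible merge in the Pareto sense can occur (for any family $\{T_{i_1},\dots,T_{i_k}\}\subseteq \mathcal{T}$, we cannot have $\{\cup_j T_{i_j}\}\triangleright\{T_{i_1},\dots,T_{i_k}\}$, as this would contradict stability), and likewise no split of any $T_i$ yields a strictly Pareto-improving refinement. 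Hence the loop terminates immediately with output $\mathcal{T}$, which exhibits $\mathcal{T}$ as an outcome of the iteration.

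Next I would handle the \emph{if} direction. Suppose $\mathcal{T}$ is produced by the merge-and-split iteration starting from some initial partition $\mathcal{S}^{(0)}$. By the termination remark, the iteration halts after finitely many rounds, and the stopping condition is exactly that neither the merge rule nor the split rule applies to $\mathcal{T}$. I would argue by contradiction: if $\mathcal{T}$ were not $\mathbb{D}_{hp}$-stable, then some group of players would prefer to reorganize through a sequence of merges and splits, i.e., there would exist a collection $\{T_{i_1},\dots,T_{i_k}\}\subseteq\mathcal{T}$ and a reshuffling reachable by merge/split that is $\triangleright$-preferred to the corresponding restriction of $\mathcal{T}$. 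Decomposing the first deviating step into either a pure merge or a pure split, this would contradict the fact that no merge or split rule is applicable at termination.

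The main obstacle, I expect, is being careful with the decomposition step in the contrapositive: a $\mathbb{D}_{hp}$ deviation is formally a sequence of merge/split moves, and I need the elementary observation that such a sequence must begin with either a single merge or a single split that is itself Pareto-improving for the parties involved. This follows because the Pareto comparison $\triangleright$ used in the rules is the same one used to define $\mathbb{D}_{hp}$-stability, so the initial local operation in any improving sequence must satisfy the hypothesis of merge or split, contradicting termination. Once this reduction is made, both directions collapse to checking that ``termination of the iteration'' and ``no admissible merge or split exists'' are synonymous with the absence of a $\mathbb{D}_{hp}$ deviation, and the equivalence follows.
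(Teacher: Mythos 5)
The paper does not actually prove this theorem---it imports it verbatim from Apt and Witzel \cite{DBLP:journals/corr/abs-cs-0605132}---so there is no in-paper proof to compare against; your attempt must stand on its own, and it has a genuine gap exactly at the point you flag as ``the main obstacle.'' In the \emph{if} direction you reduce an arbitrary $\mathbb{D}_{hp}$ deviation (a sequence of merge/split moves) to the claim that such a sequence must \emph{begin} with a single merge or split that is itself Pareto-improving, and you justify this only by saying the same order $\triangleright$ is used in both definitions. That inference does not hold: under the Pareto order an improving sequence can perfectly well start with a step that is not Pareto-improving for the players it touches (e.g., a coalition first splits, temporarily lowering some members' payoffs, and the fragments then merge with other coalitions to end up strictly better off). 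Nothing in the termination condition of the algorithm rules out such non-monotone deviation paths, so ``no single merge or split applies'' does not by itself imply ``no group profits from a sequence of merges and splits.'' The way this is actually closed in Apt--Witzel is definitional: $\mathbb{D}_{hp}(\mathcal{T})$ is \emph{defined} so that the only admissible defections from $\mathcal{T}$ are partitions reachable by one application of the merge rule or one application of the split rule, whence $\mathbb{D}_{hp}$-stability is literally the statement that no single merge or split is applicable, and the equivalence with being a terminal point of the iteration is nearly immediate. The paper's informal gloss (``leaving $\mathcal{T}$ through merge-and-split'') obscures this, but without that restriction the statement you are trying to prove is the much stronger $\mathbb{D}_c$-type stability, which the paper itself notes need not exist.

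Your \emph{only if} direction is fine under the reading that ``the outcome of iterating'' means the outcome of \emph{some} iteration from \emph{some} initial partition: a $\mathbb{D}_{hp}$-stable $\mathcal{T}$, fed to the algorithm as its own initial partition, is a fixed point and hence an outcome. Just be aware that this direction also silently relies on the single-operation reading of $\mathbb{D}_{hp}$; once you adopt that definition explicitly, both directions collapse to the observation that ``no merge or split rule is applicable'' and ``$\mathbb{D}_{hp}$-stable'' are the same condition, plus the termination remark to guarantee that outcomes of the iteration exist.
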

\begin{remark}
For the proposed $(\mathcal{N}, V)$ collaborative distributed inference game, the proposed merge-and-split algorithm converges to a $\mathbb{D}_{hp}$-stable partition. 
\end{remark}
It is known that if $\mathcal{T}$ is $\mathbb{D}_c$-stable, then $\mathcal{T}$ is the outcome of every iteration of the merge-and-split rules and it is a unique $\mathbb{D}_c$-stable partition~\cite{Apt_ageneric}. 
Nonetheless, a $\mathbb{D}_c$-stable partition does not always exist. A  $\mathbb{D}_c$-stable partition is not guaranteed for our collaborative game and its existence depends on the specific characteristics of the sensor network and the parameters of the cost function in \eqref{ccf}.

\begin{remark}
For the proposed $(\mathcal{N}, V)$ collaborative distributed inference game, the proposed merge-and-split algorithm converges to the optimal $\mathbb{D}_c$-stable partition, if such a partition exists.
Otherwise, the proposed algorithm converges to a $\mathbb{D}_{hp}$-stable partition.
\end{remark}

\begin{proof}
By the properties of $\mathbb{D}_c$-stable partition shown in~\cite{DBLP:journals/corr/abs-cs-0605132}, $\mathbb{D}_c$-stable partition is a unique outcome of any arbitrary merge-and-split iteration. Thus, if a $\mathbb{D}_c$-stable partition exists, the merge-and-split algorithm finally converges to it~\cite{Walid2009}.
\end{proof}

\section{simulation results}
\label{sec:sim}
In this section, we present the simulation results of our proposed game theoretical approach to the collaborative distributed inference problem.  We consider a wireless sensor network with $N$ sensors deployed in a $[0,1.5] \times [0,1.5]$ square area of interest. Let the location of sensor $n$ be denoted by $\mathbf{s_n}=[s_{n1}, s_{n2}]$. The amount of dependence measured in terms of Kendall's $\tau$ between any two sensors $n$ and $m$ follows the power exponential model \cite{Vuran04spatio-temporalcorrelation}
\begin{eqnarray}
\tau(d_{n,m}) = \mathrm{e}^{-d_{n,m}^2},
\label{eq:depmodel}
\end{eqnarray} 
where $d_{n,m}=\| \mathbf{s}_n-\mathbf{s}_m \|$ is the distance between nodes $n$ and $m$ respectively located at coordinates $\mathbf{s}_n$ and $\mathbf{s}_m$.

We first consider a 8-sensor network where each sensor's observation follows Gaussian distribution with mean $\theta$ and variance $\sigma_n^2$, and the inter-sensor dependence is described by a Gaussian copula. Let $\mathbf{s_S}$ denote the location of the signal source which is $[0.75, 0.75]$ in this experiment. The variance of each sensor's observation is inversely propotional to the distance between the sensor and the signal source, i.e., $\sigma_n^2 = 1/|\mathbf{s_n}-\mathbf{s_S}|$. We set $rE_t = 1$ and  $\alpha = 4$, thus, according to (\ref{eq:energycon}), the largest coalition size that satisfies the energy efficiency constraint is $|S|=4$.  

In the problem of estimation, the prior distribution of the unknown parameter $\theta$ is assumed to have standard Gaussian distribution (zero mean, unit variance). 
The average FI of coalition $S$ is given as 
\begin{eqnarray}
I(S) = \mathbf{1}^T \Sigma_{S}^{-1} \mathbf{1}
\end{eqnarray}
where $\mathbf{1}$ is an all one vector with dimension $|S|$ by $1$, and $\Sigma_{S}$ is the covariance matrix of coalition $S$, i.e., 
$
\Sigma_{S} = \left[ \sigma_{m,n}\right]_{m,n \in S}
$
with $\sigma_{m,n}$ representing the covariance of sensor $m$ and sensor $n$.
In the detection problem, we set the parameters under hypothesis $H_0$ and $H_1$ to be $\theta_0=0$ and  $\theta_1=\sqrt{2}$.  
The KLD corresponding to a coalition $S$ is 
\begin{eqnarray}
D(S) = \mathbf{1}^T \Sigma_{S}^{-1} \mathbf{1}
\end{eqnarray}
With the above setting, the average FI and KLD have exactly the same expression. Thus, we present the simulation results without distinguishing between the problems of estimation and detection. 

%

In the initialization step, each sensor is set to be a coalition by itself, i.e., $\mathcal{S}=\left\{\{1\}, \{2\}, \{3\}, \{4\}, \{5\}, \{6\}, \{7\}, \{8\}\right\}$. By applying the proposed merge-and-split algorithm iteratively, three coalitions are formed as shown in Figure \ref{fig:network}. It can be seen that each coalition contains physically apart, and thus statistically less dependent, sensors so that redundancy loss is avoided and diversity gain is taken advantage of to the largest degree. Also, the sensors closer to the signal source, who already have a good individual performance, form smaller coalitions, while the distantly located sensors form relatively large coalitions to improve their performance. Since a $\mathbb{D}_c$-stable solution is not guaranteed in this example, the resulting partition of the merge-and-split algorithm may change with different initializations. With each iteration of merge-and-split, the overall payoff \footnote{We use ``overall payoff" to imply the payoff averaged over all sensors. The term ``overall" will continually  be used  with the same implication in the later part of this section.} of sensors increases, until no further merge or split occurs as shown in Figure \ref{fig:avgpayoff}.

\begin{figure}[b!]
\centering
\includegraphics[width=0.5\columnwidth]{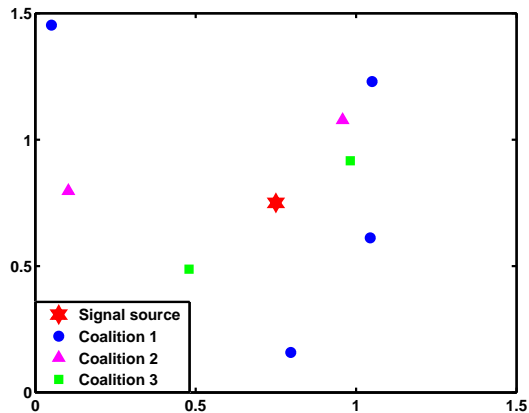}
\caption{The deployment of the 8-sensor network and the final partition.}
\label{fig:network}
\end{figure}

\begin{figure}[b!]
\centering
\includegraphics[width=0.5\columnwidth]{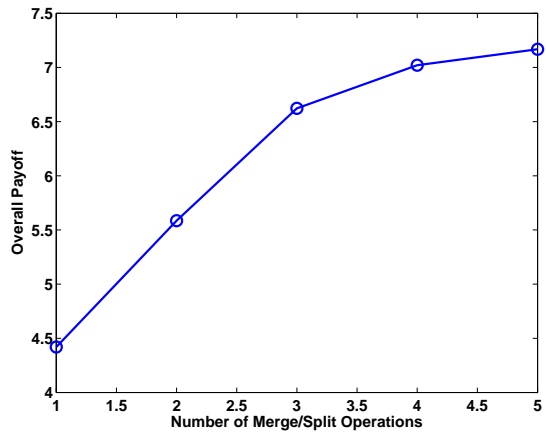}
\caption{Average inference performance increases with each merge or split operation.}
\label{fig:avgpayoff}
\end{figure}

We further consider a heterogeneous sensor network consisting of $28$ sensors deployed in the same area of interest. We assume that observations of 14 sensors follow Gaussian distribution with $\theta$ being the mean and unit variance, while observations of the other 14 sensors follow exponential distribution parameterized with $\theta$.  Within each Monte Carlo trial, the sensor locations are generated independently according to uniform distribution, through which the correlation matrix is obtained according to the dependence model in (\ref{eq:depmodel}). A student's t copula parameterized by the correlation matrix with the degree of freedom $\nu = 4$ is used to generate the dependence among sensors. 
The performance corresponding to different coalition formation approaches for this particular sensor deployment is evaluated. A total of 100 Monte Carlo trials are conducted and the performance is averaged over these trials.
We compare our proposed distributed algorithm based on coalition formation game with the approach of random coalition formation. 
In the random coalition formation method, a partition is randomly selected from the set of all partitions that satisfy the communication constraint with equality \footnote{The equality is to ensure a maximized inference performance, since the inference performance is nondecreasing in coalition size, according to Proposition \ref{prop:nonde} and Proposition \ref{prop:kldnonde}. We make the coalition size to be exactly $\alpha$, except for the one that may include less than $\alpha$ due to the fact that the total number of sensors $N$ may not be an integer multiple of $\alpha$. }.

\begin{figure}[b!]
\centering
\includegraphics[width=0.5\columnwidth]{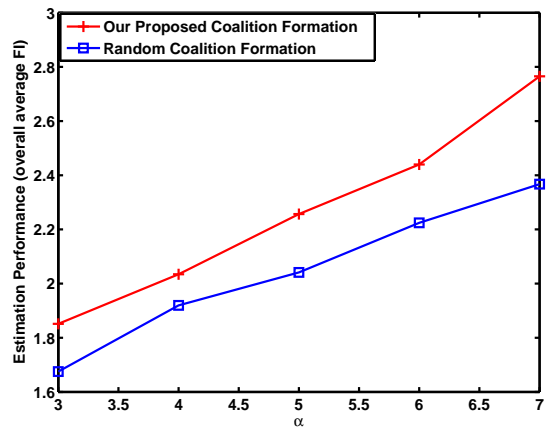}
\caption{Overall estimation performance vs. communication constraint $\alpha$.}
\label{fig:avgp_vs_alpha}
\end{figure}

\begin{figure}[b!]
\centering
\includegraphics[width=0.5\columnwidth]{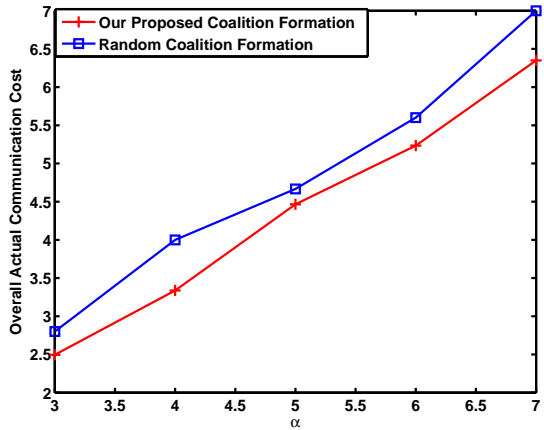}
\caption{Overall actual communication cost vs. communication constraint $\alpha$ in estimation problem.}
\label{fig:avgc_vs_alpha}
\end{figure}

In the estimation problem, $\theta$ is assumed to be standard Gaussian distributed. 
Figure \ref{fig:avgp_vs_alpha} shows the overall estimation performance of our proposed distributed coalition formation approach, compared with the random coalition formation approach. As the constraint on communication cost gets looser ($\alpha$ increases), the overall estimation performance becomes better for both methods. However, since our approach fully explores and utilizes inter-sensor dependence during the coalition formation process, it achieves much better performance than the random selection method. The overall communication costs, defined as $1/|\mathcal{S}|\sum_{S \in \mathcal{S}} E(S)$, are plotted in Figure \ref{fig:avgc_vs_alpha}, which demonstrates the superiority of our method in terms of communication efficiency. It has to be noted that, the average communication cost corresponding to our distributed coalition formation method is not the maximum communication cost that is allowed by the predefined constraint. It reflects the true cost of communication of the resulting partition, which may be much less than the maximum cost allowed. 

In the detection problem, we set $\theta_0=1$ and $\theta_1=2.4$. Superior overall detection performance of the partitions resulting from our proposed coalition formation approach is shown in Figure \ref{fig:Det_KLD_alpha}, in comparison with the random coalition formation. The overall actual communication costs versus $\alpha$ are plotted in Figure  \ref{fig:Det_com_alpha}, demonstrating a better communication efficiency of our approach.

In our distributed algorithm, when two coalitions are unable to contribute much to each other in inference performance due to their dependence (or redundancy loss incurred), they will not merge into a new coalition. 
Thus, it forces the coalition to seek cooperation with other coalitions to which it can contribute more, or where it is highly valued due to the diverse information that it is able to bring in. In this way, the overall diversity gain is increased while redundancy loss is decreased. 
By formulating the distributed inference problem as a coalition formation game and solving the game using an iterative algorithm, we are able to obtain better system performance in terms of both inference performance and energy efficiency, compared with the random coalition formation scheme. 

In numerous practical scenarios, sensor networks are subject to changes. For example, sensors embedded in people's cellphones change locations frequently. New sensors joining or existing sensors quitting also contributes to the time varying nature of the network. The distributed nature of our proposed coalition formation method in which sensors form coalitions automatically, makes it suitable for networks with time-varying configurations. 

\begin{figure}[b!]
\centering
\includegraphics[width=0.5\columnwidth]{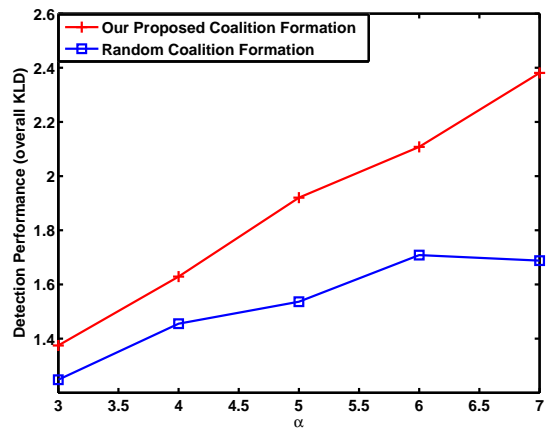}
\caption{Overall detection performance vs. communication constraint $\alpha$.}
\label{fig:Det_KLD_alpha}
\end{figure}

\begin{figure}[b!]
\centering
\includegraphics[width=0.5\columnwidth]{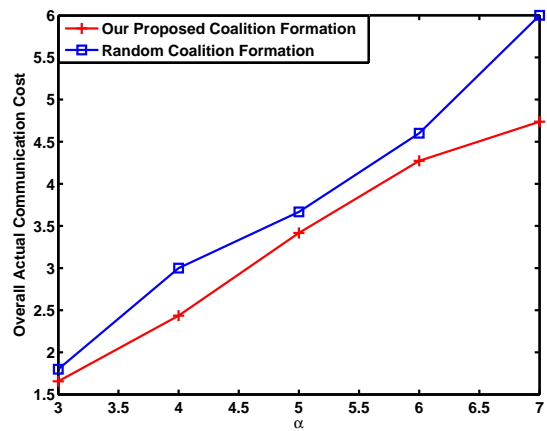}
\caption{Overall actual communication cost vs. communication constraint $\alpha$ in detection problem.}
\label{fig:Det_com_alpha}
\end{figure}

\section{conclusion}
\label{sec:con}
In this paper, we investigated a collaborative distributed inference problem in an energy constrained wireless sensor network with dependent observations. In the collaborative setting, sensors form coalitions and share observations within the coalition for improved inference performance. We focused on the formation of non-overlapping collaborating coalitions such that each sensor's performance is maximized while the energy constraint is satisfied. To analyze the benefit and cost of forming a certain coalition, we used copula theory to describe the dependence among observations, which provided ``redundancy'' and ``diversity'' aspects of inter-sensor dependence, respectively for the problem of estimation and detection. We defined GAFI and GKLD to quantify the diversity gain and redundancy loss in forming a coalition due to inter-sensor dependence. A coalition formation game was proposed for the distributed inference problem. A merge-and-split algorithm was utilized for our coalitional game and the stability of the outcome of our proposed algorithm was analyzed. Finally, numerical results were provided to demonstrate the performance of our game theoretical approach. Further investigation of the dependence-related concepts of diversity gain and redundancy loss in inference problems under different scenarios is to be conducted in the future work.

\section*{Acknowledgements}
Research was sponsored by ARO grant W911NF-14-1-0339. 

\bibliographystyle{IEEEtran}
\bibliography{IEEEabrv,refs-f12}

\end{document}